\numberwithin{equation}{section}
\theoremstyle{plain}
\newtheorem{theorem}{Theorem}
\newtheorem{proposition}[theorem]{Proposition}
\newtheorem{lemma}[theorem]{Lemma}
\theoremstyle{definition}
\newtheorem{example}[theorem]{Example}
\newtheorem{definition}[theorem]{Definition}
\theoremstyle{remark}
\newtheorem{note}[theorem]{Note}
\DeclareMathOperator{\tr}{Tr}
\def\Z{\mathbb{Z}}	
\def\C{\mathbb{C}}	
\def\R{\mathbb{R}}	
\def\S{\mathcal{S}}
\newcommand\cS{{\mathcal S}}
\def\bbbn{{\mathbb N}}
\def\bbbz{{\mathbb Z}}
\newcommand\la{{\lambda}}
\def\r{{\rm r}}
\def\l{{\rm l}}
\def\c{{\rm c}}
\newcommand\cP{{\mathcal P}}
\renewcommand{\leq}{\leqslant} 		
\renewcommand{\geq}{\geqslant}
\def\bF{{\bf F}}
\def\bu{{\bf u}}
\newcommand\cR{{\mathcal R}}
\def\ud{\mathrm{d}}
\newcommand{\dev}{\partial}
\def\A{\mathcal{A}}
\def\hA{\hat{\A}}
\def\F{\mathcal{F}}
\def\hF{\hat{\F}}
\begin{document}

	\title{Recursion and Hamiltonian operators for integrable Nonabelian Difference Equations}
	\author{Matteo Casati}
	\author{Jing Ping Wang}
	\address{School of Mathematics, Statistics and Actuarial Science\\University of Kent\\Canterbury CT2 9FS, United Kingdom}
	\begin{abstract}
	In this paper, we carry out the algebraic study of integrable differential-difference equations whose field variables take values in an associative (but not commutative) algebra.
	We adapt the Hamiltonian formalism to nonabelian difference Laurent polynomials and describe how to obtain a recursion operator from the Lax representation of an integrable 
	nonabelian differential-difference system. As an application, we propose a novel family of integrable equations: the nonabelian Narita-Itoh-Bogoyavlensky lattice, for which we construct 
	their recursion operators and Hamiltonian operators and prove the locality of infinitely many commuting symmetries generated from their highly nonlocal recursion operators. Finally, 
	we discuss the nonabelian version of several integrable difference systems, including the relativistic Toda chain and Ablowitz-Ladik lattice.
	\end{abstract}
	\maketitle

	\tableofcontents

	\section{Introduction}
	This paper is devoted to the study of hidden structures for integrable differential-difference equations when their 
	field variables take their values in an associative algebra. 
	Typical examples are matrix or operator algebras. 
	Matrix versions of integrable equations appeared in the early days of the modern theory of integrable systems, when matrix KdV was introduced by Lax \cite{L68}. The main observation is that there exist many equations of this type which are integrable regardless of the dimension of the matrices \cite[Chapter 6,~7]{s2020}; even  more impressively, symmetries and conserved quantities for these systems do not depend on the single entries of these matrices, but are defined in terms of matrix polynomials. This allows us to regard the matrices (or, more precisely, the matrix-valued functions) in the systems ``simply'' as the generators of an associative but not commutative algebra: the structures we present in this paper for differential-difference systems are then independent from the matrix spaces where the problem was originally set.
	
	Nonabelian integrable systems are not only an interesting generalisation and ``abstraction'' of matrix systems; this theory can be applied to operator algebras, and the equations we obtain may be regarded as quantised version of classical integrable systems. For example, the nonabelian KdV equation can be obtained promoting the classical fields to a quantum version and replacing Poisson brackets with commutators \cite{fc95}.  
	
	Integrable partial differential evolution equations on associative algebras were studied by Olver
	and Sokolov \cite{OS98}. The authors developed the basic theory of Hamiltonian structures for associative 
	algebra-valued differential equations and presented a list of integrable one-component evolution equations of such type. The completeness of this list was proved in \cite{ow00}. In 2000, Mikhailov and Sokolov successfully brought the concepts of symmetries, first integrals, Hamiltonian and recursion 
	operators to integrable ODEs on associative algebras \cite{ms00}. These were used to prove the integrability of the Kontsevich nonabelian system \cite{WE12}. 
	
	Historically, nonabelian integrable differential-difference equations appeared as the discrete analogs of matrix integrable partial differential equations. The matrix Toda lattice is originated from a discrete version of the principal chiral field model proposed by Polyakov \cite{bmrl80, k81, mik81}. Its quantisation was studied in \cite{a}. Recently, the non-abelian Toda lattice emerged in the study of Matrix valued Hermite polynomials \cite{b}.
	Despite some specific nonabelian differential-difference equations have appeared in the literature, for example the nonabelian Volterra chain \cite{ms00} and the aforementioned Toda lattice, so far a systematic study of such equations has not been carried out.

	In Section \ref{sec:Ham} we extend the Hamiltonian 
formalism to a noncommutative (nonabelian) difference field by adapting the language introduced by Olver and Sokolov \cite{OS98} for 
	systems of PDEs to the difference case.
	Differential-difference nonabelian equations we are going to consider are evolutionary equations for functions of a discrete spatial variable and of continuous time. Their Hamiltonian description is obtained in terms of a formal (difference) calculus of variations similar to the one introduced by Kupershmidt \cite{Kup85}, but for a space of noncommutative local densities: that means that we must distinguish between operators of multiplication on the left and on the right -- that we denote $\l_a f:=af$, $\r_a f:=fa$. A local functional is written as  ``the integral of the trace of a local density''
	\begin{equation}
	F=\int\tr f.
	\end{equation}
The formal definition is thoroughly discussed in Paragraph \ref{ssec:space}; however, if our noncommutative algebra is an algebra of matrices, the trace operation is the standard one. Moreover, the operation of integral denotes the (possibly infinite) sum over all the lattice sites; in our approach we disregard the issue of convergence and identify it with the equivalence relation up to rigid shifts. 	Despite the formalism is better tailored for the local case, we use it to prove the Hamiltonian property for both local 
	and certain non-local difference operators. We give a detailed concrete example of this.
	
	In Section \ref{sec:Eqns} we begin with describing the method to construct recursion 
operators using Lax representation. This method was first proposed in \cite{mr2001h:37146} and extended to the case 
when Lax representations invariant under reduction groups \cite{wang09, kmw13,cmw19}. We then present two basic 
examples of nonabelian integrable lattices systems: the Volterra and the Toda lattice. They are obtained promoting the 
Lax representation of the corresponding commutative systems to the nonabelian case. We construct their 
	Hamiltonian and recursion operators.
	
	In Section \ref{sec:NIB} we introduce and study in detail the nonabelian version of Narita-Itoh-Bogoyavlensky 
	lattice. This is an evolutionary equation for a real-valued function $u(n,t)$, $n\in\Z, t\in\R$, such that
	\begin{equation}\label{bogadd}
	u_t=\sum_{k=1}^p \left( u_{k} u- u u_{-k}\right), \quad p\in\mathbb{N},
	\end{equation}
	where with $u_k$ (resp., $u_{-k}$) we denote the function evaluated on the translated lattice $u(n+k,t)$ (resp., $u(n-k,t)$).
	
	Equation \eqref{bogadd} is Hamiltonian for the operator
	\begin{equation}
	H=\sum_{k=1}^p\left(\r_u\cS^k\r_u-\l_u\S^{-k}\l_u-\r_u\l_u+\r_u^2+\c_u(1-\cS)^{-1}\c_u\right),
	\end{equation}
	where we have introduced the notation $\c_u:=\r_u-\l_u$ for the commutator and the shift operator $\cS^k u=u_k$. The Hamiltonian functional is
	\begin{equation}
	h=\int\tr u.
	\end{equation}
	To the best of our knowledge, this system has not appeared in the literature. We construct its
	recursion operator for arbitrary $p$ and show that this nonlocal recursion operator generates infinitely many local 
	symmetries.
	
	Finally, in Section \ref{sec:List} we present more examples of nonabelian integrable systems, obtained from their commutative counterparts studied in \cite{kmw13}.
	We give their Hamiltonian structures, Lax representations and recursion operators, which are computed using the methods in sections \ref{sec:Eqns} and \ref{sec:NIB}. 
	We believe that the equations in section \ref{RT}--\ref{BM} are new.
	
	\section{Nonabelian Hamiltonian equations: the $\theta$ formalism}\label{sec:Ham}
	In this section we present the language introduced by P.~Olver and V.~Sokolov \cite{OS98} to describe Hamiltonian equations defined on an associative (but not commutative) algebra. However, such a formalism is adapted to deal with differential difference systems, rather than partial differential ones.
	
	A differential-difference system is a set of equations describing the time evolution of $\ell$ functions of two variables $\{u^i(n,t)\}_{i=1}^\ell=\mathbf{u}(n,t)$, where $n\in\Gamma\subseteq\Z$ and $t\in\R$; In this paper we consider the infinite case $n\in\Z$. The \emph{shift operator} $\cS$ acts on the functions $\mathbf{u}(n,t)$ and gives the function evaluated on shifted variables $n$, by
	\begin{equation}
	\cS^j \mathbf{u}(n,t)=\mathbf{u}(n+j,t),
	\end{equation}
	that we denote as $\mathbf{u}_j$.
	
	As for the formal calculus of variations introduced by Kupershmidt \cite{Kup85}, we replace the functions $\{u^i(n,t)\}$ with a (formal) algebra of difference polynomials.
	\subsection{The space of nonabelian difference Laurent polynomials}\label{ssec:space}
	Let $\A$ be the linear associative algebra (over $\R$) of the Laurent polynomials in the variables $u^i_n$, $i=1,\ldots,\ell$ and $n\in\Z$. We denote the (nonabelian) product in $\A$ by simple juxtaposition; each monomial possesses an inverse by $(u^i_n)^{-1}u^i_n=u^i_n(u^i_n)^{-1}=1$ and
	$$
	(u^{i_1}_{n_1}u^{i_2}_{n_2}\cdots u^{i_p}_{n_p})^{-1}=(u^{i_p}_{n_p})^{-1}\cdots	(u^{i_2}_{n_2})^{-1}(u^{i_1}_{n_1})^{-1}.
	$$
	In this formal setting, the \emph{shift operator} is an automorphism of $\A$, namely a linear map $\cS\colon\A\to\A$ such that $\cS(ab)=\cS(a)\cS(b)$ for all $a,b\in\A$. Its action on the generators is given by
	\begin{equation}
	\cS \,u^i_n=u^i_{n+1}.
	\end{equation}
	We call the elements of $\A$ the \emph{difference (Laurent) polynomials}. Let $[a,b]=ab-ba$ denote the commutator on $\A$.
	
	Moreover, we introduce an equivalence relation in $\A$, up to cyclic permutations of the product, and we call the canonical projection the \emph{trace} --  it obviously vanishes on commutators $\tr [a,b]=0$.
	
	The basic example of $\A$ is given by regarding the generators $\{u^i_n\}$ as elements of $\mathfrak{gl}_N$; the trace is then the standard trace of matrices.
	
	The elements of the quotient space
	\begin{equation}
	\F=\frac{\A}{(\cS-1)\A+[\A,\A]}
	\end{equation}
	are called \emph{local functionals}. We denote the projection from $\A$ to $\F$ as $\int \tr \cdot$, which satisfies
	\begin{equation}
	\int\tr \cS f=\int\tr f,\qquad\qquad\int\tr fg=\int\tr gf
	\end{equation}
	for all $f,g\in\A$.
	
	Note in particular that we have a difference version of the integration by parts, namely
	\begin{equation}
	\int\tr (\cS f)g=\int\tr \cS^{-1}\left(\left(\cS f\right)g\right)=\int\tr f\left(\cS^{-1}g\right).
	\end{equation}
	\subsection{Vector fields and difference operators}
	\begin{definition}
		An \emph{evolutionary difference vector field} $X$ is a derivation of $\A$ that commutes with $\cS$.
	\end{definition}

	Derivations on $\A$ must be treated carefully because of the non-commutative product. A generic derivation on $\A$ has the form
	\begin{equation}
	X=\sum_{i=1}^\ell\sum_{n\in\Z}\phi^{i,n} \frac{\dev}{\dev u^i_n},
	\end{equation}
	$\phi^{i,n}\in\A$. Imposing the commutation rule with $\cS$ we find as a necessary and sufficient condition
	\begin{equation}
	\phi^{i,n}=\cS^n\,X^i,
	\end{equation}
	where $\{X^i\}_{i=1}^\ell$ is called the \emph{characteristic} of the evolutionary vector field.
	
	Since $X$ is by definition a derivation, the Leibniz's rule applies:
	\begin{equation}
	X(fg)=X(f)g+f X(g).
	\end{equation}
	Note that $f X(g)\neq X(g)f$; using the Leibniz's rule for monomials in $\A$ we obtain
	\begin{multline}\label{eq:evVField}
	X(u^{i_1}_{n_1}u^{i_2}_{n_2}\cdots u^{i_k}_{n_k})=\left(\cS^{n_1}X^{i_1}\right)u^{i_2}_{n_2}\cdots u^{i_k}_{n_k}+u^{i_1}_{n_1}\left(\cS^{n_2}X^{i_2}\right)\cdots u^{i_k}_{n_k}+\\
	\cdots+u^{i_1}_{n_1}u^{i_2}_{n_2}\cdots \left(\cS^{n_k}X^{i_k}\right).
	\end{multline}
	and
	\begin{equation}
	X((u^i_n)^{-1})=-u^i_n X(u^i_n) u^i_n.
	\end{equation}
	\begin{definition}
		A \emph{local scalar difference operator} is a linear map $K\colon\A\to\A$ that can be written as a linear combination 
		of terms of the form $\r_f \l_g \cS^p$ for $p\in\Z$, $f,g$ in $\A$, where $\r$ and $\l$ denote, respectively, the 
		multiplication on the right and the multiplication on the left. Namely, we have
		\begin{equation}
		\r_f\l_g\cS^p h=g \left(\cS^p h\right) f.
		\end{equation}
	\end{definition}
	The multiplication operators have the obvious properties
	\begin{equation}
	\l_f \l_g=\l_{fg}\qquad\qquad \r_f\r_g=\r_{gf}\qquad\qquad \r_f\l_g=\l_g\r_f.
	\end{equation}
	Moreover, we define the \emph{commutator} $\c_f:=\r_f-\l_f$, that is, $[f,g]=\c_g f$ and $\c_f$ is a derivation. 
	Indeed, we have
	$$
	\c_f (gh)=(\c_f g)h+g(\c_f h).
	$$
	
	The Fr\'echet derivative of $f\in\A$ is a difference operator $f_*\colon\A^\ell\to\A$ defined so that
	\begin{equation}
	f_*[P]:=\frac{\ud}{\ud \epsilon}\,f(u^i+\epsilon P^i)\Big\vert_{\epsilon=0}.
	\end{equation}
	for $P=\{P^i\}_{i=1}^\ell$, $P^i\in\A$. With the notation $f(u^i+\epsilon P^i)$ we denote the element of $\A[\epsilon]$ 
	obtained by replacing the generators $u^i$ 
	(and their shifts $u^i_n$) in the expression of $f$ with $u^i+\epsilon P^i$ (and corresponding shifts), where $\epsilon=\cS\epsilon$ is a formal constant parameter. 
	It is closely related to evolutionary vector fields, namely for any $f\in\A$ and $P$ we have $f_*[P]=X_P(f)$, where with $X_P$ we denote the evolutionary vector field with characteristic $\{P^i\}$.
	
	For a monomial $h=u^{i_1}_{n_1}u^{i_2}_{n_2}\cdots u^{i_k}_{n_k}$, we have
	\begin{eqnarray*}
		h_*[P]&=&\sum_{p=1}^k u^{i_1}_{n_1}u^{i_2}_{n_2}\cdots u^{i_{p-1}}_{n_{p-1}} \left(\cS^{n_p}P^{i_p}\right) u^{i_{p+1}}_{n_{p+1}}\cdots u^{i_k}_{n_k} \\
		&=&
		\sum_{p=1}^k \l_{u^{i_1}_{n_1}}\l_{u^{i_2}_{n_2}}\cdots \l_{ u^{i_{p-1}}_{n_{p-1}}}\r_{u^{i_{k}}_{n_{k}}}\r_{u^{i_{k-1}}_{n_{k-1}}}\cdots \r_{u^{i_{p+1}}_{n_{p+1}}}\cS^{n_p}P^{i_p}.
	\end{eqnarray*}

	The formal adjoint of the difference operator $K=\l_f \r_g \cS^p$ is
	\begin{equation}
	K^\dagger:=\cS^{-p}\r_f\l_g,
	\end{equation}
	and it is defined from the identity
	\begin{equation}
	\int \tr \left( a \,(K b)\right)=\int\tr \left( (K^\dagger a)\,b\right)
	\end{equation}
	We say that a (scalar) difference operator is \emph{skewsymmetric} if $K^\dagger=-K$.
	
	In the multi-component case, namely when $\ell>1$, we consider $\ell\times\ell$ matrices $(K)_{ij}$ whose entries are scalar difference operators. They define difference operators $K\colon\A^\ell\to\A^\ell$. 
	In this setting, the formal adjoint of $K$ is $(K^\dagger)_{ij}=(K)_{ji}^\dagger$. To avoid making the notation too heavy, we denote the entry $(K)_{ij}$ as $K^{ij}$.
	
	Given a local functional $F\in\F$, we use the Fr\'echet derivative to define its variational derivative in a fashion which is consistent with the non commutativity of the setting.
	
	Taken a generic evolutionary vector field of characteristic $v=\{v^i\}_{i=1}^\ell$, we define the variational derivative by the formula 
	\begin{equation}
	\int\tr\sum_{i=1}^\ell \frac{\delta F}{\delta u^i}v^i:=\int\tr F_*[v].
	\end{equation}
	In the commutative case, this is equivalent to the standard form
	\begin{equation}
	\frac{\delta F}{\delta u^i}=\sum_{n\in\Z}\cS^{-n}\frac{\dev F}{\dev u^i_n}.
	\end{equation}
	
	A differential-difference system
	\begin{equation}\label{eq:evsyst}
	u^i_t=X^i(\mathbf{u},\cS\mathbf{u},\cS^2\mathbf{u},\ldots) \qquad i=1,\ldots,\ell
	\end{equation}
	is identified, in this formal setting, with an evolutionary vector field of characteristic $\{X^i\}_{i=1}^\ell$. The time evolution of a local functional $F$ is, then, given by
	\begin{equation}
	\frac{\ud F}{\ud t}=\int\tr F_*[X].
	\end{equation}
	\subsection{Hamiltonian operators and functional vector fields}
	The notion of functional vector field was introduced in the context of Hamiltonian (commutative) PDEs \cite{Kup80, O80, ykb81}. It was then generalised to the nonabelian case \cite{OS98} 
	and to the commutative difference one \cite{Kup85, CW19}.
	
	In this paper we follow closely Olver and Sokolov's treatment of the subject, while a broader geometrical description will be discussed in a forthcoming work.
	
	We define basic uni-vectors $\theta_{i,n}$, where $\theta_{i,n}=\cS^n\theta_i$. Since the variables $u^i_n$ to which they are ``duals'' (more precisely, they should be regarded as dual objects of $\ud u^i_n$) 
	are not commutative with respect to the product, they do not have any particular parity under it (as opposite as in the commutative case, where they are Grassmann variables).
	
	Let $\hA:=\A[\{\theta_{i,n}\}_{i=1,n\in\Z}^\ell]$ be the space of polynomials in $\theta$'s with difference functions as coefficients. We call the elements of $\hA$ the \emph{densities of (functional) poly-vector fields}. The space $\hA$ is a graded algebra where $\deg_\theta\theta_{i,n}=1$, $\deg_\theta u^i_n=0$. Homogeneous elements of $\hA$ of degree $p$ in $\theta$ are densities of $p$-vector fields. Functional vector fields are elements of the quotient space
	\begin{equation}
	\hF=\frac{\hA}{(\cS-1)\hA+[\hA,\hA]}.
	\end{equation}
	The trace form (and as a consequence the quotient operation $\hA\twoheadrightarrow\hF$) is graded commutative, namely
	\begin{equation}\label{eq:tr}
	\tr \left(a\,b \right)=(-1)^{|a||b|}\tr \left(b\,a\right),
	\end{equation}
	where we denote $|a|:=\deg_\theta a$. To make the notation lighter, we denote $\theta_i=\theta_{i,0}$ in the multi-component case, and -- when we move to the scalar $\ell=1$ case -- $\theta_n=\theta_{1,n}$, $\theta=\theta_{1,0}$. To avoid confusion between unshifted basic univectors in the multi-component case and shifted ones in the scalar case, in the following Sections we will introduce different Latin and Greek letters denoting, respectively, different $u^i$'s and $\theta_j$'s.
	
	To any difference operator $K\colon\A^\ell\to\A^\ell$ we associate a formal vector field $\mathbf{pr}_{K\theta}$ whose characteristics is $\{\sum_{j}K^{ij}\theta_j\}$. Such a formal vector field is a graded derivation (of degree 1), namely it satisfies the graded Leibniz's property
	\begin{equation}
	\mathbf{pr}_{K\theta} (ab)=\mathbf{pr}_{K\theta}(a)b+(-1)^{|a|}a\,\mathbf{pr}_{K\theta}(b).
	\end{equation}
	
	To any difference operator $K$ we can associate the functional \emph{bivector}
	\begin{equation}\label{eq:defBiv}
	P=\frac12\int\tr\left(\sum_{i,j=1}^\ell \theta_i \,K^{ij}\theta_j\right).
	\end{equation}
	Note that the operator $K$ acts on the variable on its right. Similarly, for $K$ a \emph{skewsymmetric} difference operator we can define a bracket between local functionals
	\begin{equation}\label{eq:defBra}
	\{F,G\}:=\int\tr \left(\sum_{i,j=1}^\ell\frac{\delta F}{\delta u^i} K^{ij} \frac{\delta G}{\delta u^j}\right).
	\end{equation}
	The skewsymmetry of the bracket is equivalent to the skewsymmetry of $K$; the graded commutativity of the trace 
	and the integral in \eqref{eq:defBiv} imply that only the skewsymmetric part of $K$ is involved in the definition of $P$.
	
	\begin{definition}
		We say that a skewsymmetric difference operator $H$ is \emph{Hamiltonian} if the bracket defined via \eqref{eq:defBra} is a Lie algebra bracket on $\F$, namely it satisfies Jacobi's identity
		\begin{equation}
		\{A,\{B,C\}\}+\{B,\{C,A\}\}+\{C,\{A,B\}\}=0
		\end{equation}
		for all $A,B,C\in\F$.
	\end{definition}
	\begin{definition}
		We say that an evolutionary system \eqref{eq:evsyst} is a \emph{Hamiltonian system} if and only if
		\begin{equation}
		X^i=\sum_{j=1}^\ell H^{ij}\frac{\delta}{\delta u^j}\left(\int\tr h\right),
		\end{equation}
		with $H$ a Hamiltonian operator and for a local functional $\int\tr h$ which is called ``the Hamiltonian'' of the system.
	\end{definition}

	A statement to determine whether a differential operator is Hamiltonian is given by Olver \cite[Chapter 7]{O93} for the 
	commutative case. The same statement and proof are valid in nonabelian difference case. However, its proof 
	relies on the properties of the variational derivative and of the graded vector
	field $\mathbf{pr}_{H\theta}$. We only present the statement.
	
	\begin{theorem}
		A difference operator $H$ is Hamiltonian if and only if
		\begin{equation}\label{eq:HamProp}
		\mathbf{pr}_{H\theta}P=\frac12 \int\tr\left(\sum_{i,j=1}^\ell \mathbf{pr}_{H\theta}(\theta_i H^{ij}\theta_j)\right)=0.
		\end{equation}
		We say that the corresponding bivector $P$ is a \emph{Poisson bivector}.
	\end{theorem}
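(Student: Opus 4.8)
The plan is to adapt Olver's proof \cite[Chapter 7]{O93} of the commutative statement to the present $\theta$-calculus. Since $H$ is skewsymmetric, the bracket \eqref{eq:defBra} is automatically skewsymmetric, so being Hamiltonian means precisely that the Jacobiator $J(F,G,K):=\{\{F,G\},K\}+\{\{G,K\},F\}+\{\{K,F\},G\}$ (equivalent, by skewsymmetry, to the cyclic expression in the definition) vanishes for all $F,G,K\in\F$. First I would rewrite the bracket through Hamiltonian vector fields: if $X_G$ denotes the evolutionary vector field of characteristic $\zeta_G^i:=\sum_j H^{ij}\tfrac{\delta G}{\delta u^j}$, the definition of the variational derivative gives $\{F,G\}=\int\tr f_*[\zeta_G]$ for any representative $f$ of $F$, whence $\{\{F,G\},K\}=\int\tr X_K\bigl(\sum_{ij}\tfrac{\delta F}{\delta u^i}H^{ij}\tfrac{\delta G}{\delta u^j}\bigr)$. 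Applying the (ungraded) Leibniz rule for the derivation $X_K$ splits this into three groups of terms: one in which $X_K$ differentiates $\tfrac{\delta F}{\delta u^i}$, one in which it differentiates the coefficients of the operator $H^{ij}$, and one in which it differentiates $\tfrac{\delta G}{\delta u^j}$.

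The second step is the cancellation of the ``Hessian'' groups. Writing $X_K\bigl(\tfrac{\delta F}{\delta u^i}\bigr)=(F_{**}[\zeta_K])^i$, where $F_{**}$ is the matrix of difference operators obtained as the Fr\'echet derivative of the covector $\tfrac{\delta F}{\delta u}$, the structural input is that $F_{**}$ is self-adjoint, $F_{**}^\dagger=F_{**}$: this is the nonabelian difference analogue of the symmetry of the second variational derivative, and it follows from the symmetry of $\partial_\epsilon\partial_\eta f(u+\epsilon v+\eta w)\big|_{\epsilon=\eta=0}$ in $v\leftrightarrow w$ together with the cyclicity of $\tr$. Using $F_{**}^\dagger=F_{**}$, the skewsymmetry $(H^{ij})^\dagger=-H^{ji}$, and integration by parts under $\int\tr$, each Hessian term is brought to the shape $\pm\int\tr\sum_i(F_{**}[\zeta_G])^i\zeta_K^i$ (or one of its analogues obtained by permuting $F,G,K$); summing the three cyclic copies inside $J(F,G,K)$, these six contributions cancel in three pairs. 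What remains is
\begin{equation}
J(F,G,K)=\sum_{\text{cyclic}}\int\tr\sum_{i,j}\frac{\delta F}{\delta u^i}\,\bigl(X_K(H^{ij})\bigr)\Bigl[\frac{\delta G}{\delta u^j}\Bigr],
\end{equation}
where $X_K(H^{ij})$ is the operator obtained by differentiating the coefficients of $H^{ij}$ along $\zeta_K$.

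The third step identifies this surviving expression with the trivector $\mathbf{pr}_{H\theta}P$ evaluated on exact covectors. In $\mathbf{pr}_{H\theta}P=\tfrac12\int\tr\sum_{ij}\mathbf{pr}_{H\theta}(\theta_i H^{ij}\theta_j)$ the prolongation $\mathbf{pr}_{H\theta}$, of characteristic $\sum_k H^{ik}\theta_k$, acts only on the $u$-variables, hence only on the coefficients of $H^{ij}$; expanding by the graded Leibniz rule and then substituting $\theta_i\mapsto\tfrac{\delta F}{\delta u^i}$ in one slot, $\theta_i\mapsto\tfrac{\delta G}{\delta u^i}$ in another and $\theta_i\mapsto\tfrac{\delta K}{\delta u^i}$ in the last, one matches each monomial of $\mathbf{pr}_{H\theta}P$ with a monomial of the sum above. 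The only care needed is the Koszul signs produced by the odd derivation $\mathbf{pr}_{H\theta}$ and by the graded commutativity \eqref{eq:tr} of the trace form: these are exactly what makes the $3!$ possible substitutions collapse onto the three cyclic terms, and the outcome is $J(F,G,K)=c\,\bigl(\mathbf{pr}_{H\theta}P\bigr)\big|_{\theta\to(\delta F,\delta G,\delta K)}$ for a nonzero universal constant $c$.

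It then remains to upgrade ``$\mathbf{pr}_{H\theta}P$ evaluates to zero on every triple of exact covectors'' to ``$\mathbf{pr}_{H\theta}P=0$ in $\hF$''. One implication is immediate (and gives the ``if'' part of the theorem); the converse requires a non-degeneracy lemma: the exact covectors $\tfrac{\delta F}{\delta u}$, with $F$ ranging over $\F$ (in fact over the functionals $\int\tr m$, $m$ a difference monomial), separate functional trivectors, equivalently a density in $\hA$ pairing to zero against all such triples already lies in $(\cS-1)\hA+[\hA,\hA]$. This is the step I expect to be the main obstacle, being the one place where the commutative PDE argument cannot simply be transcribed: it must be established directly in the difference $\theta$-calculus. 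Once it is available, combining it with the identity of the third step yields $J\equiv0$ on $\F\iff\mathbf{pr}_{H\theta}P=0$, which is the assertion. (Conceptually one may instead introduce the Schouten bracket $[\![\cdot,\cdot]\!]$ on functional poly-vectors, prove $[\![P,P]\!]=\mathbf{pr}_{H\theta}P$ up to a scalar, and invoke the general equivalence between the Jacobi identity for the induced bracket and $[\![P,P]\!]=0$; unwinding that equivalence reproduces the computation above and meets the same non-degeneracy point.)
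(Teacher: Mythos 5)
The paper does not actually prove this theorem: the authors only state it and assert that Olver's commutative argument \cite[Chapter 7]{O93} carries over to the nonabelian difference setting, so there is no in-paper proof to compare against line by line. Your proposal is precisely an unwinding of that intended argument --- Hamiltonian vector fields, cancellation of the Hessian contributions via self-adjointness of the second variational derivative under $\int\tr$, and identification of the surviving terms with $\mathbf{pr}_{H\theta}P$ evaluated on exact covectors --- so in approach it matches what the paper (implicitly) relies on. The one step you flag as open, that exact covectors separate elements of $\hF$ (equivalently, that a trivector density pairing to zero against all triples of variational derivatives lies in $(\cS-1)\hA+[\hA,\hA]$), is indeed the only place where the transcription from \cite{O93} is not mechanical, since both the shift and the cyclic-trace quotient enlarge the ideal one works modulo; the paper is silent on this point as well. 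As written, your argument establishes the ``if'' direction unconditionally and reduces the ``only if'' direction to that separation lemma, which you should either prove (a jet-surjectivity argument choosing $F=\int\tr m$ for suitable monomials $m$ works) or at least state as an explicit auxiliary result rather than an expectation.
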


	\begin{example}
		Let $\ell=2$. We denote by $a,b$ the generators of $\A$ and by $\theta$ and $\zeta$ the corresponding basic vectors in $\hA$. The matrix operator
		\begin{equation}\label{eq:HToda-def}
		H=\begin{pmatrix}
		0 & \r_a \cS-\l_a \\ \r_a-\l_{a_{-1}}\cS^{-1} & \r_b-\l_b
		\end{pmatrix}
		\end{equation}
		is Hamiltonian.
		
		Let us denote $\Theta$ the two-component vector $(\theta,\zeta)$. We have
		\begin{equation}
		H\Theta=\begin{pmatrix}
		0 & \r_a \cS-\l_a \\ \r_a-\l_{a_{-1}}\cS^{-1} & \r_b-\l_b
		\end{pmatrix}\begin{pmatrix}
		\theta\\\zeta
		\end{pmatrix}=\begin{pmatrix}
		\zeta_1a-a\zeta\\
		\theta a-a_{-1}\theta_{-1}+\zeta b-b\zeta
		\end{pmatrix}
		\end{equation}
		and
		\begin{equation}
		P=\frac12 \int\tr\Theta_i H^{ij}\Theta_j=\int\tr\left(a\theta\zeta_1+a\zeta\theta+b\zeta\zeta\right).
		\end{equation} 
		The straightforward application of the formula for $\mathbf{pr}_{H\Theta}$, together with the graded commutativity of the trace, gives
		\begin{equation}
		\begin{split}
		\mathbf{pr}_{H\Theta}P&=\int\tr\left[\left(\zeta_1a-a\zeta\right)\left(\theta\zeta_1+\zeta\theta\right)+\left(\theta a-a_{-1}\theta_{-1}+\zeta b-b\zeta\right)\zeta\zeta\right]\\
		&=\int\tr\left(a\theta\zeta_1\zeta_1-a_{-1}\theta_{-1}\zeta\zeta\right)=\int\tr\left((\cS-1)a_{-1}\theta_{-1}\zeta\zeta\right)=0.
		\end{split}
		\end{equation}
		The Hamiltonian operator $H$ of equation \eqref{eq:HToda-def} is the local Hamiltonian structure of the nonabelian Toda system \cite{G98}, presented in Section \ref{toda}.
	\end{example}
	
	The formalism we have defined so far is tailored on \emph{local} Hamiltonian operators. However, many of the local Hamiltonian structures which have been discovered 
	in the commutative case are replaced, in the nonabelian setting, by nonlocal ones.
	For a given nonlocal operator, many of the routine computations needed to check whether it is Hamiltonian can be performed using the same technique as in the local case, relying on the properties of the integral and 
	trace operations. For instance, if the nonlocal part of an operator is of form $F (\cS-1)^{-1}G$, with $F$ and $G$ multiplication operators, namely of the form $\l_a\r_b$,
	the new element is the introduction of \emph{formal nonlocal variables}
	$$\theta':=(\cS-1)^{-1}G\theta;$$
	the parity and properties of these new variables are the same as the ones for $\theta$, but obviously we have the additional relation $(\cS-1)\theta'=G\theta$. 
	Some details about how to use these formal variables in the computations are provided in the proof of the following proposition.

	\begin{proposition}\label{thm:Volterra-Ham}The operator 
		\begin{equation}
		\label{eq:Volterra_HamStr}
			H=\r_u\cS \r_u-\l_u\cS^{-1}\l_u+\r_u\c_u-\c_u(1-\cS)^{-1}\c_u
	\end{equation}
		is Hamiltonian.
	\end{proposition}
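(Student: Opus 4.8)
\emph{Proof plan.} The plan is to invoke the criterion of the Theorem above: $H$ is Hamiltonian precisely when it is skewsymmetric and the associated bivector $P=\tfrac12\int\tr(\theta\,H\theta)$ satisfies $\mathbf{pr}_{H\theta}P=0$. The argument runs parallel to the Example preceding the statement; the only genuinely new feature is the nonlocal tail $-\c_u(1-\cS)^{-1}\c_u$, which I would handle by introducing a formal nonlocal univector as in the discussion above.

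First I would check that $H$ is skewsymmetric. Writing $\r_u\cS\r_u=\r_{u_1u}\cS$ and applying the rule $(\l_f\r_g\cS^p)^\dagger=\cS^{-p}\r_f\l_g$ one finds $(\r_u\cS\r_u)^\dagger=\l_u\cS^{-1}\l_u$, so $\r_u\cS\r_u-\l_u\cS^{-1}\l_u$ is skewsymmetric. For the remaining two summands I would use $\c_u^\dagger=-\c_u$ together with the operator identity $((1-\cS)^{-1})^\dagger=(1-\cS^{-1})^{-1}=1-(1-\cS)^{-1}$; a short computation then yields $\r_u\c_u+(\r_u\c_u)^\dagger=\c_u^2=\c_u(1-\cS)^{-1}\c_u+(\c_u(1-\cS)^{-1}\c_u)^\dagger$, so that $\r_u\c_u-\c_u(1-\cS)^{-1}\c_u$ is skewsymmetric, and therefore so is $H$.

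Next, working in the scalar case $\ell=1$, I would introduce the formal nonlocal univector $\theta':=(1-\cS)^{-1}(\theta u-u\theta)$, characterised by $(1-\cS)\theta'=\theta u-u\theta$ (equivalently $\cS\theta'=\theta'-\theta u+u\theta$) and sharing the parity of $\theta$. A direct computation gives
\[
H\theta=\theta_1u_1u-uu_{-1}\theta_{-1}+\theta u^2-u\theta u+u\theta'-\theta'u,
\]
which simplifies, using graded cyclicity of the trace and $\int\tr(\cS-1)(\cdot)=0$, to
\[
P=\tfrac12\int\tr(\theta\,H\theta)=\int\tr\Big(\theta\theta_1u_1u+\tfrac12\theta^2u^2-\tfrac12\theta u\theta u-\tfrac12\,\theta\c_u\theta'\Big).
\]
Since $\mathbf{pr}_{H\theta}(\theta)=0$, $\mathbf{pr}_{H\theta}(u_n)=\cS^n(H\theta)$, and $\mathbf{pr}_{H\theta}$ commutes with $\cS$, one also gets $\mathbf{pr}_{H\theta}(\theta')=-(1-\cS)^{-1}\big(\theta(H\theta)+(H\theta)\theta\big)$. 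Plugging these into $\mathbf{pr}_{H\theta}P=\tfrac12\int\tr\big(\theta\,\mathbf{pr}_{H\theta}(H\theta)\big)$ and expanding produces a finite sum of local and $\theta'$-dependent trace monomials.

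The final and most laborious step is to show this sum vanishes in $\hF$, using repeatedly the graded cyclicity $\tr(ab)=(-1)^{|a||b|}\tr(ba)$, the difference integration by parts $\int\tr(\cS-1)f=0$, and the defining relation $(1-\cS)\theta'=\theta u-u\theta$ to rewrite shifted occurrences of $\theta'$. I expect this bookkeeping to be the main obstacle: the purely local terms telescope to zero just as in the Example, but the $\theta'$-dependent contributions must be grouped by how many $\theta'$-factors they carry and tracked together with their graded signs, and the crucial point is to verify that the $\theta'$-linear part recombines, via $(1-\cS)\theta'=\theta u-u\theta$, into a total difference (with the $\theta'$-quadratic terms cancelling outright), whence $\mathbf{pr}_{H\theta}P=0$ and $H$ is Hamiltonian.
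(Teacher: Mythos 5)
Your framework coincides with the paper's: check skewsymmetry, introduce a formal nonlocal univector, and verify $\mathbf{pr}_{H\theta}P=0$. The preliminaries you do carry out are correct: the adjoint computations (the paper leaves the skewsymmetry check implicit, so this is a welcome addition), the expression for $H\theta$, and the bivector $P$ all agree with the paper's, once one notes that your $\theta'=(1-\cS)^{-1}(\theta u-u\theta)$ is $-(\rho-\lambda)$ for the paper's pair of nonlocal univectors $\rho=(\cS-1)^{-1}\theta u$, $\lambda=(\cS-1)^{-1}u\theta$. (Minor point: by the graded Leibniz rule and $|\theta|=1$ one has $\mathbf{pr}_{H\theta}P=-\tfrac12\int\tr\bigl(\theta\,\mathbf{pr}_{H\theta}(H\theta)\bigr)$, with a minus sign; immaterial here since the target is zero.)

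The gap is that the proof of this proposition \emph{is} the cancellation computation, and you have not performed it; moreover, your one substantive prediction about how it resolves is not what happens. The cross-term coming from the nonlocal part of $H$ acting on the nonlocal part of $P$ produces a contribution quadratic in the nonlocal variable, of the form $\int\tr\,\theta'\theta'(\theta u-u\theta)$, and this does \emph{not} ``cancel outright''. The paper reduces it with a genuine extra idea: from $\int\tr(\cS-1)A^3=0$ with $A=\rho-\lambda$ and $B=(\cS-1)A=\theta u-u\theta$, and graded cyclicity (which identifies $\tr(A^2B)$, $\tr(ABA)$, $\tr(BA^2)$), one gets $\int\tr A^2B=-\int\tr AB^2-\tfrac13\int\tr B^3$, converting the doubly nonlocal term into singly nonlocal and local ones. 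Only after this replacement, and after the $\theta'$-linear coefficients from all four cross-terms $\mathbf{pr}_{H_L\theta}P_L$, $\mathbf{pr}_{H_N\theta}P_L$, $\mathbf{pr}_{H_L\theta}P_N$, $\mathbf{pr}_{H_N\theta}P_N$ are collected (shifting $a\,\theta'_1\,b$ to $a_{-1}\theta' b_{-1}$ under the integral and converting $\theta'(a_{-1}-a)$ back to local expressions), does everything cancel up to a residual total shift $\int\tr(u_1^2u\theta\theta_1\theta_1-u^2u_{-1}\theta_{-1}\theta\theta)=0$. The $\theta'$-linear part is likewise not a total difference term by term; it vanishes only as a sum. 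Without executing this bookkeeping and supplying the cubic trace identity, the Hamiltonian property has not been established.
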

	\begin{proof}
		We are in the scalar case, namely $\ell=1$. We denote $\theta$ the basic density of uni-vectors corresponding to the generator $u$ of $\A$.
		
		We compute the expression for the Poisson condition as in \eqref{eq:HamProp}, where $P$ is the functional bivector defined in terms of the operator $H$. Moreover, for the sake of compactness, we will split operator and bivector in local and nonlocal part. Namely, we have
		\begin{equation}
		H \theta= H_L\theta+ H_N\theta,
		\end{equation}
		with
		\begin{align}\label{eq:Volt-HL}
		H_L\theta&=\theta_1 u_1u-u u_{-1}\theta_{-1}-u\theta u+\theta u^2\\\label{eq:Volt-HN}
		H_N\theta&=(\rho-\lambda)u-u(\rho-\lambda).
		\end{align}
		We have introduced the nonlocal (densities of) uni-vectors
		\begin{equation}
		\rho=\left(\cS-1\right)^{-1}\theta u,\qquad\qquad\lambda=\left(\cS-1\right)^{-1}u\theta
		\end{equation}
		and denote $\cS^n\rho=\rho_n$, $\cS^n\lambda=\lambda_n$.
		
		The bivector defined by $H$ as $P=\frac{1}{2}\int\tr\theta H\theta$ is $P=P_L+P_N$ with
		\begin{align}
		P_L&=\frac12\int\tr\theta\left(\theta_1 u_1u-u u_{-1}\theta_{-1}-u\theta u+\theta u^2\right)\\
		&=\frac12\int\tr(\theta\theta_1 u_1u-\theta u u_{-1}\theta_{-1}-\theta u\theta u+\theta\theta u^2)\\ \label{eq:Volt-PL}
		&=\frac12\int\tr\left(2u_1u\theta\theta_1+u^2\theta\theta\right),
		\end{align}
		where we have used the trace graded commutativity to conclude that $\tr(\theta u u_{-1}\theta_{-1})=\tr(-uu_{-1}\theta_{-1}\theta)$ and $\tr(\theta u\theta u)=0$, and $\int f=\int\cS f$.
		
		For the nonlocal part, we have
		\begin{align}
		P_N&=\frac12\int\tr\left(\theta(\rho-\lambda)u-\theta u(\rho-\lambda)\right)\\
		&=\frac12\int\tr\left((\rho-\lambda)(\theta u- u\theta)\right)\\\label{eq:Volt-PN}
		&=\frac12\int\tr (\rho-\lambda)(\rho_1-\lambda_1),
		\end{align}
		where the last passage follows from the observation that
		\begin{gather}
		\theta u-u\theta= (\cS-1)\rho-(\cS-1)\lambda=\rho_1-\lambda_1-\rho+\lambda,\\
		\tr((\rho-\lambda)(\lambda-\rho))=0.
		\end{gather}
		
		The Hamiltonian condition for the operator $\mathbf{pr}_{H_1\theta}P=0$ can be, for simplicity, computed in four parts
		\begin{equation}
		\mathbf{pr}_{H_1\theta}P=\mathbf{pr}_{H_L\theta}P_L+\mathbf{pr}_{H_N\theta}P_L+\mathbf{pr}_{H_L\theta}P_N+\mathbf{pr}_{H_N\theta}P_N.
		\end{equation}
		The computation of $\mathbf{pr}_{H_L\theta}P_L$ is a straightforward application of the definition of evolutionary vector field; we use the graded commutativity on the terms we obtain to move ``most'' of the variables $u$'s on the left.
		We have
		\begin{align}
		2\mathbf{pr}_{H_L\theta}P_L&=\int\tr\left(-4u_1u\theta u\theta\theta_1+u_1^2u\theta\theta_1\theta_1+u_1u^2\theta\theta\theta_1+2u^2\theta\theta_1u_1\theta\right.\\
		&\qquad\qquad\left.-uu_1\theta_{-1}u\theta\theta+u^3\theta\theta\theta+u_1u\theta\theta u\theta_1-u^2\theta u\theta\theta\right). \label{eq:Volt-Ham1}
		\end{align}
		
		Some subtler operation must be performed when dealing with expression including the nonlocal terms. First, we compute $\mathbf{pr}_{H_N\theta}P_L$ from the definition, and exploit the graded commutativity and the properties of the integral to collect the nonlocal term $(\rho-\lambda)$ in front of the expression. The straightforward application of the formula produces terms of the form $a(\rho_1-\lambda_1)b$ that we replace with $a_{-1}(\rho-\lambda)b_{-1}$ thanks to the definition of the integral. Finally, expressions of the form $(\rho-\lambda)(a_{-1}-a)$ can be arranged as $[(\cS-1)(\rho-\lambda)]a$, which is local.
		
		We obtain
		\begin{align}
		2\mathbf{pr}_{H_N\theta}P_L&=\int\tr\left(2u_1u\theta u\theta\theta_1-2u^2\theta\theta_1u_1\theta+\right.\\
		&\qquad+\left.\left(\rho-\lambda\right)\left(2uu_{-1}\theta_{-1}\theta-2\theta\theta_1u_1u+u^2\theta\theta-\theta\theta u^2\right)\right). \label{eq:Volt-Ham2}
		\end{align}
		
		To compute the expressions $\mathbf{pr}_{H_L\theta}P_N$ and $\mathbf{pr}_{H_L\theta}P_N$, we first examine how to compute the evolution of $P_N$ along a generic vector $v$ such that $\deg_\theta v=1$. We have
		\begin{equation}
		2\mathbf{pr}_vP_N=\int\tr \mathbf{pr}_v[(\rho-\lambda)(\rho_1-\lambda_1)]
		\end{equation}
		The formal evolutionary vector field $\mathbf{pr}_v$ is a graded derivation; this fact, together with the graded commutativity, allows us to rewrite
		\begin{equation}
		\begin{split}
		2\mathbf{pr}_v P_N&=\int\tr\left[(\rho_1-\lambda_1)\mathbf{pr}_v(\rho-\lambda)-(\rho-\lambda)\mathbf{pr}_v(\rho_1-\lambda_1)\right]\\
		&=\int\tr(\rho-\lambda)\mathbf{pr}_v((\cS^{-1}-\cS)(\rho-\lambda))\\
		&=\int\tr(\rho-\lambda)\mathbf{pr}_v((1+\cS^{-1})(u\theta-\theta u))\\
		&=\int\tr(\rho-\lambda)(1+\cS^{-1})\mathbf{pr}_v(u\theta-\theta u)\\
		&=\int\tr\left[(2+\cS-1)(\rho-\lambda)\right]\mathbf{pr}_v(u\theta-\theta u)\\
		&=\int\tr2(\rho-\lambda)\mathbf{pr}_v(u\theta-\theta u)+(\theta u-u\theta)\mathbf{pr}_v(u\theta-\theta u).\\
		\end{split}
		\end{equation}
		Using this formula to compute $\mathbf{pr}_{H_L\theta}P_N$ and $\mathbf{pr}_{H_N\theta}P_N$ we obtain
		\begin{align}
		2\mathbf{pr}_{H_L\theta}P_N&=\int\tr\left[2(\rho-\lambda)\left(\theta u^2\theta-u u_{-1}\theta_{-1}\theta-u\theta u\theta-\theta u\theta u+\theta\theta u^2\right)\right.\\ \label{eq:Volt-Ham3}
		&\qquad-u_1u\theta\theta u\theta_1-u^2u_{-1}\theta_{-1}\theta\theta+2u^2\theta\theta u\theta-u^2\theta u\theta\theta\\
		&\qquad\left.-u_1u^2\theta\theta\theta_1+u u_{-1}\theta_{-1}u\theta\theta-u^3\theta\theta\theta+2u_1u\theta u\theta\theta_1\right].\\
		2\mathbf{pr}_{H_N\theta}P_N&=\int\tr\left[(\rho-\lambda)(2u\theta\theta u-\theta\theta u^2-u^2\theta\theta)\right.\\ \label{eq:Volt-Ham4a}
		&\qquad\qquad-\left.2(\rho-\lambda)(\rho-\lambda)(\theta u-u\theta)\right].
		\end{align}
		To deal with the ``double non-local term'' in \eqref{eq:Volt-Ham4a} we first observe that $\int(\cS-1)AAA=\int(A_1A_1A_1-AAA)=0$ for any element $A$. Moreover, if we denote $A_1-A=B$ we have
		\begin{equation}
		\int\tr\left((B-A)(B-A)(B-A)-AAA\right)=\int\tr(B^3+3B^2A+3A^2B)=0,
		\end{equation}
		so that we can replace in the integral and trace operation $AAB$ with $-ABB-\frac13BBB$. Performing this substitution for $A=\rho-\lambda$, $B=\theta u-u\theta$ we obtain
		\begin{equation}\label{eq:Volt-Ham4}
		\begin{split}
		2\mathbf{pr}_{H_N\theta}P_N&=\int\tr\left[(\rho-\lambda)(2\theta u\theta u-2\theta u^2\theta+2u\theta u\theta-\theta\theta u^2-u^2\theta\theta)\right.\\
		&\qquad\qquad+\left.2u^2\theta u\theta\theta-2u^2\theta\theta u\theta\right].
		\end{split}
		\end{equation}
		The summation of all the terms obtained in \eqref{eq:Volt-Ham1}, \eqref{eq:Volt-Ham2}, \eqref{eq:Volt-Ham3}, and \eqref{eq:Volt-Ham4} gives
		\begin{equation}
		2\mathbf{pr}_{H\theta}P=\int\tr\left(u_1^2u\theta\theta_1\theta_1-u^2u_{-1}\theta_{-1}\theta\theta\right)=0.
		\end{equation}
		This proves that $H$ is a Hamiltonian difference operator.
	\end{proof}
	\section{Integrable difference equations}\label{sec:Eqns}
	
	We say that an evolutionary differential-difference system 
	\begin{equation}\label{eq:pde}
	\bu_{t}=\bF(\bu_m,\bu_{m+1},\ldots \bu_{n}), \qquad  m<n \in \bbbz,\ \  \bu\in\A^l,
	\end{equation}
	is (Lax) \emph{integrable} if one can associate to \eqref{eq:pde} a pair of linear operators
	\[
	L=\cS -U(\bu;\lambda),\qquad A=D_{t}-B(\bu;\lambda),
	\]
	which is conventionally called the {\em Lax pair}. 
	
	Here $U, B$ are square matrices, whose entries are functions of the dependent variable $\bu $ and its shifts
	and certain rational (in some cases elliptic) functions of the spectral parameter $\lambda$, such that 
	equation \eqref{eq:pde} is equivalent to the compatibility of of these operators
	\begin{equation}\label{eq:zeroc}
	D_{t}(U)=\cS(B)U-UB.
	\end{equation}
	The latter is often called a {\em Lax representation} of equation 
	\eqref{eq:pde}. Systems admitting a Lax representation can be solved via the spectral transform method \cite{abl1,abl2,zak-sol}.
	
	Symmetries of an evolutionary equation are its compatible evolutionary flows.
	An integrable equation \eqref{eq:pde} has an infinite sequence of commuting symmetries
	\begin{equation}\label{eq:sym}
	\bu_{t_k}=\bF^k, \qquad k\in\bbbn\, ,
	\end{equation}
	which can be associated with a commutative algebra of linear operators 
	\begin{equation}\label{eq: LAAk}
	A^{t_k}=D_{t_k}-B^{t_k}(\bu;\lambda),\qquad [A^{t_i},\ A^{t_j}]=0 .
	\end{equation}
	Operator $A$ and equation \eqref{eq:pde} can be considered as members in the sequence of operators $\{A^{t_k}\}$ 
	and symmetries \eqref{eq:sym} respectively, for  particular values of $k$.
	
	A \emph{recursion operator} $\cR$ is a linear pseudo-difference operator mapping a symmetry of the evolutionary equation to a new symmetry. If the evolutionary equation is biHamiltonian and the first Hamiltonian structure is invertible, the operator
	$$
	\cR:=H_2 H_1^{-1}
	$$
	is a recursion operator, and $\bu_{t_k}=\cR^k \bu_{t_0}$. We call $\bu_{t_0}$ the \emph{seed} of the integrable hierarchy.
	
	In this Section, we provide the Lax representation and then compute the recursion operator for the nonabelian Volterra and Toda systems, following the general procedure detailed in Section \ref{sec:LaxGen}.

	\subsection{From the Lax representation to the recursion operator}\label{sec:LaxGen}
	In general, it is not easy to construct a recursion operator for a
	given integrable equation although we have the explicit formula for its definition. The difficulty lies in the 
	nonlocality of the operators. The case of recursion operators with a special form for the nonlocal terms, called \emph{weakly nonlocal} \cite{MaN01}, has been widely investigated; see, for example, 
	\cite{mr88g:58080,mr1974732}. 
	
	If the Lax representation of an evolutionary equation is
	known, an amazingly simple approach to
	construct a recursion operator was proposed in \cite{mr2001h:37146} and later applied for lattice equations
	\cite{Maciej200127}. 
	This idea can also be used for Lax pairs that are
	invariant under a reduction group \cite{wang09, kmw13,cmw19}. In this case, we are able to construct 
	recursion operators of form of rational difference operators \cite{cmw19}. 
	
	
	
	The main idea to derive a recursion operator using a Lax representation  is based on the fact that matrices 
	$B^{t_k}(\bu,\lambda)$ of the operators 
	$A^{t_k}=D_{t_k}-B^{t_k}(\bu,\lambda)$ corresponding to a hierarchy 
	can be related as
	\begin{equation}\label{eq:VK}
	B^{t_{k+1}}(\bu,\lambda)=\mu(\lambda)B^{t_k}(\bu,\lambda)+W^{t_k}(\bu,\lambda),
	\end{equation}
	where $\mu(\lambda)$ is a rational (elliptic in the some cases, e.g., the Landau-Lifshitz equation) multiplier and 
	$W^{t_k}(\bu,\lambda)$ is a rational matrix with a fixed (i.e. $k$ independent) divisor 
	of poles. If the system and its Lax representation is obtained as a result of a reduction with a reduction group $G$ \cite{mik80}, 
	then the multiplier  $\mu(\lambda)$ is a primitive automorphic function 
	\cite{LM05} of a finite reduction group, or  in the elliptic case is one of the  generators of the $G$-invariant 
	subring of the coordinate ring \cite{DS08}. The matrix  $W^{t_k}(\bu,\lambda)$ also depends on the dependent variables $\bu$ 
	and its shifts.
	
	The substitution of \eqref{eq:VK} in the Lax representation \eqref{eq:zeroc} results in
	\begin{equation}\label{eq:UWK}
	D_{t}(U)=\mu(\lambda) D_{\tau}(U) + \cS(W) U -U W,
	\end{equation}
	where we denote $t_{k+1}=t$, $t_k=\tau$, and $W^{t_k}=W$.
	Equation \eqref{eq:UWK} enables us to express the entries of the matrix $W$ in terms of  $\bu, \bu_{t}, 
	\bu_{\tau}$ 
	and their $\cS$-shifts. 
	It enables us to find a linear pseudo-difference operator $\cR$ such that $\bu_{t}=\cR (\bu_{\tau})$, i.e. a 
	recursion operator for a differential-difference
	hierarchy of commuting symmetries. From 
	this construction it follows that a 
	matrix $U$ and a multiplier $\mu(\lambda)$ defines a recursion operator completely and uniquely.

	In Section \ref{sec3} we present the detailed computation for the recursion operator of the Narita-Itoh-Bogoyavlensy lattice. 
	Here we show the simpler procedure for the recursion operator of the Volterra chain, which is the simplest special case of the former. For other examples, we'll only present the corresponding forms of \eqref{eq:VK}.
	\subsection{Nonabelian Volterra}\label{ssec:Volterra}
	The Nonabelian Volterra chain
	\begin{equation}\label{eq:Volterra}
	u_t=u_1u-u u_{-1}
	\end{equation}
	admits a Lax representation
	\begin{equation}
	U=\begin{pmatrix}
	\lambda & u \\ -1 & 0
	\end{pmatrix}\qquad\qquad B=\begin{pmatrix}\lambda^2+u & \lambda u\\-\lambda & u_{-1}\end{pmatrix}.
	\end{equation}
	Note that both $U$ and $B$ admit $\mathbb{Z}_2$ as a reduction group, suggesting the symmetry condition
	\begin{equation}\label{eq:symgroup}
	\begin{pmatrix} 1 & 0\\0&-1\end{pmatrix}B(-\lambda)\begin{pmatrix} 1 & 0\\0&-1\end{pmatrix}=B(\lambda).
	\end{equation}
	The ansatz for the auxiliary matrix $B^{t}$ in terms of $B^{\tau}$ must then be of the form
	\begin{equation}\label{eq:VoltAnsatz}
	\begin{split}
	B^{t}&=\lambda^2 B^{\tau}+W=\lambda^2 B^{\tau}+\lambda^2 W^{(2)} + \lambda W^{(1)} + W^{(0)}\\
	&=\lambda^2 B^{\tau}+\lambda^2\begin{pmatrix} e & 0\\ 0 & f\end{pmatrix}+\lambda \begin{pmatrix} 0 & p\\ q & 0\end{pmatrix}+\begin{pmatrix} s & 0\\ 0 & t\end{pmatrix};
	\end{split}
	\end{equation}
	the symmetry of the matrices $W^{(s)}$ is determined by condition \eqref{eq:symgroup}.
	
	From \eqref{eq:UWK}, we have
	$$
	\begin{pmatrix}
	0 & u_t \\ 0 & 0
	\end{pmatrix}=\lambda^2 \begin{pmatrix}0 & u_\tau\\0 & 0\end{pmatrix}+\cS(W)U-UW,
	$$
	which leads to the system of equations (here $p_1=\cS p$)
	\begin{align}
	\lambda^2\colon&&0&=e_1-e\\
	&&0&=-f_1+q_1\\
	&&u_\tau&=p+u f-e_1u\\
	\lambda^1\colon&&0&=s_1-s-u q-p_1\\
	&&0&=q_1 u+p\\
	\lambda^0\colon&&0&=-t_1+s\\
	&&u_t&=s_1u-u t
	\end{align}
	The first equation implies that $e$ must be a constant, that we set equal to 0. Our aim is to obtain the operator $\Re$ such that $u_t=\Re u_\tau$.
	
	The system can be solved observing that
	\begin{gather}
	u_t=(\r_u\cS-\l_u\cS^{-1})s,\\
	(\cS-1)s=(\l_u-\cS \r_u\cS)q,\\
	(\l_u-\r_u\cS) q=u_\tau,
	\end{gather}
	from which the recursion operator reads
	\begin{equation}\label{eq:recVolterra}
	\Re=\left(\r_u\cS-\l_u\cS^{-1}\right)\left(\cS-1\right)^{-1}\left(\l_u-\cS \r_u \cS\right)\left(\l_u-\r_u\cS\right)^{-1}.
	\end{equation}

	The nonabelian Volterra chain is Hamiltonian with respect to the Hamiltonian difference operator \eqref{eq:Volterra_HamStr}	and the Hamiltonian functional
	\begin{equation}\label{eq:VoltHamF}
	h_0=\int\tr u.
	\end{equation}

	\subsection{Nonabelian Toda}\label{toda}
	\label{sec:Toda}
	
	The Nonabelian Toda system
	\begin{equation}\label{eq:naToda}
	\begin{cases}
	a_t =& b_1 a - a b\\
	b_t =& a - a_{-1}
	\end{cases}
	\end{equation}
	is Hamiltonian with respect to the Hamiltonian difference operator (cf. \eqref{eq:HToda-def}) 
	\begin{equation}\label{eq:H1}
	H_1=\begin{pmatrix}
	0 & \r_a \cS-\l_a \\ \r_a-\l_{a_{-1}}\cS^{-1} & \r_b-\l_b
	\end{pmatrix}
	\end{equation}
	and the Hamiltonian functional
	\begin{equation}
	h_0=\int\tr\left(a+\frac12 b^2\right).
	\end{equation}
	System \eqref{eq:naToda} admits a Lax pair representation given by
	\begin{equation}\label{eq:TodaLax}
	U=\begin{pmatrix}\lambda+b_1 & a\\-1&0\end{pmatrix}\qquad\qquad B=\begin{pmatrix}0 & -a\\1 & \lambda+b\end{pmatrix}.
	\end{equation}
	Take as the ansatz
	\begin{equation}\label{eq:TodaAnsatz}
	B^{t}=\lambda B^{\tau}+\lambda W^{(1)}+W^{(0)}
	\end{equation}
	with $W^{(1)}$ and $W^{(0)}$ generic $2\times2$ matrices;	using \eqref{eq:UWK}, we get
	the recursion operator $\Re$ for the system
	\begin{equation}
	\Re=\begin{pmatrix}
	\l_{b_1}\r_a\cS-\l_a\r_b & \r_a\cS^2-\l_a\\
	\r_a\cS-\l_{a_{-1}}\cS^{-1} & \r_b \cS-\l_b
	\end{pmatrix}
	\begin{pmatrix}
	\left(\l_a-\r_a\cS\right)^{-1} & 0\\0 & (1-\cS)^{-1}
	\end{pmatrix}.
	\end{equation}
	The second Hamiltonian structure $H_2$, obtained by $H_2=\Re H_1$, is given by a matrix with the following entries \cite{li}:
	\begin{align}
	\left(H_2\right)_{11}&=\l_a \cS^{-1}\l_a-\r_a\cS \r_a-\r_a\c_a+\c_a(1-\cS)^{-1}\c_a\\
	\left(H_2\right)_{12}&=\l_a\r_b-\r_a\c_b-\r_a\cS \r_b+\c_a(1-\cS)^{-1}\c_b\\
	\left(H_2\right)_{21}&=-\r_{ab}+\l_b\cS^{-1}\l_a+\c_b(1-\cS)^{-1}\c_a\\
	\left(H_2\right)_{22}&=\cS^{-1}\l_a-\r_a\cS-\r_b\c_b+\c_b(1-\cS)^{-1}\c_b.
	\end{align}
	
	The system \eqref{eq:naToda} is Hamiltonian with respect to $H_2$ and the Hamiltonian functional $h_{-1}=-\int\tr b$, which is a Casimir of $H_1$.
	Moreover, the constant change of coordinates $b\mapsto b+\eta$ induces for $H_2$ the transformation
	\begin{equation}
	H_2\mapsto H_2-\eta H_1.
	\end{equation}
	This means that $(H_1,H_2)$ form a biHamiltonian pair.
	
	\begin{note}The \emph{nonabelian two-component Volterra} equation
		\begin{eqnarray*}
			\left\{ {\begin{array}{l} u_t= v_1u-uv\\ v_t=  uv-vu_{-1} \end{array} } \right.
		\end{eqnarray*}
		can be obtained from Volterra chain $w_t=w_1w-ww_{-1}$ introducing $u$ and $v$ as the even and odd elements of the lattice $u(n,t):=w(2n,t)$, $v(n,t):=w(2n-1,t)$, or from the Toda lattice with the Miura transformation
		\begin{equation}
		a=uv\qquad\qquad b=u_{-1}+v.
		\end{equation}
		A Lax pair for the system is
		\begin{eqnarray*}
			L= \lambda \cS-u-v_{1}+\lambda^{-1}u v \cS^{-1}; \qquad
			A= \lambda^{-1}uv\cS^{-1}.
		\end{eqnarray*}
	\end{note}
	
	\section{Integrable nonabelian Narita-Itoh-Bogoyavlensky lattice}\label{sec:NIB}
	In this section, we consider a family of nonabelian differential difference equations \eqref{bogadd}. It is the nonabelian version of the
	Narita-Itoh-Bogoyavlensky lattice \cite{Bog88, narita, itoh} known as an integrable
	discretisation for the Korteweg-de Vries equation:
	\begin{equation}\label{bogcom}
	u_t=u \sum_{k=1}^p \left( u_{k} - u_{-k}\right).
	\end{equation}
	For the Narita-Itoh-Bogoyavlensky lattice of arbitrary $p\in\bbbn$, its recursion operator and Hamiltonian structures have been extensively studied
	in \cite{wang12}. Note that for $p=1$ the equation reduces to the equation for the Volterra chain (see Section \ref{ssec:Volterra}). Remarkably there are explicit expressions for all its symmetries. 
	In this section, we show that the method used for the commutative case can be extended to the nonabelian case in a straightforward manner. We are going to present their recursion operators and 
	to prove that these operators generate local symmetries.
	
	\subsection{Construction of recursion operators}\label{sec3}
	
	In this section, we use the ideas presented in Section \ref{sec:LaxGen} to construct a recursion operator of system \eqref{bogadd} from
	its Lax representation. 
	
	The Lax operator of \eqref{bogadd} for any $p\in\bbbn$ is given in \cite{Bog88} as follows:
	\begin{equation}\label{lax0}
	L=\cS+u\cS^{-p}\ . 
	\end{equation}
	We can compute its hierarchy of symmetry flows using the formula
	\begin{eqnarray*}
		&&L_{t_k}=[B^{(k)},\ L], \qquad B^{(k)}=(L^{(p+1)k})_{\geq 0},
	\end{eqnarray*}
	where $\geq 0$ means taking the terms with non-negative power of $\cS$ in $L^{(p+1)k}$.
	
	We rewrite \eqref{lax0} into the following matrix form
	\begin{equation}\label{lax}
	L(\la)=\cS-\la U^{(0)} - U^{(1)}:=\cS-C(\la) ,
	\end{equation}
	where $U^{(0)}=(u^{(0)}_{i,j})$ and $U^{(1)}=(u^{(1)}_{i,j})$ are $(p+1)\times
	(p+1)$ matrices.  The matrices $U^{(0)}$ and $U^{(1)}$ are given by
	\begin{eqnarray*}
		U^{(0)}  =\left(\begin{array}{lcrl} 1 &0 & \cdots&  0 \\ 0 & 0 & \cdots
			& 0\\
			\vdots& \vdots & \ddots &\vdots \\0&0&\cdots&0 \\0&0&\cdots&  0
		\end{array} \right) 
		\quad \mbox{and}\quad
		U^{(1)}  =\left(\begin{array}{lcrll} 0 &0 & \cdots& 0 & -u \\ 1 & 0 & \cdots & 0
			& 0\\
			\vdots & \ddots& &\vdots&\vdots \\0&\cdots&1&0&0 \\0&\cdots& 0& 1&0
		\end{array} \right) 
	\end{eqnarray*}
	respectively, that is, their non-zero entries are
	$$ u^{(0)}_{11}=1, \quad u^{(1)}_{1,p+1}=-u \quad \mbox{and}\quad u^{(1)}_{i+1,i}=1, \ \ i=1,2,\cdots,
	p.$$
	We rewrite the operator $B$ in a matrix form denoted by $B(\la)$, so that the symmetry flows can be obtained by the zero curvature equation
	\begin{equation}\label{zero}
	C(\la)_t=\cS(B(\la))\ C(\la)-C(\la) B(\la) .
	\end{equation}
	Make the Ansatz
	\begin{equation}\label{mw}
	\bar B(\la)=\la^{p+1} B(\la)+ W(\la),\quad \mbox{where} \quad W(\la)=\sum_{i=0}^{p+1}\la^{p+1-i}A^{(i)},
	\end{equation}
	where $A^{(i)}=(a^{(i)}_{kl})$ are $(p+1)\times (p+1)$ matrices 
	with the only non-zero entries being $a^{(i)}_{j+i,j}$ for $1\leq j\leq p+1$.
	Here we read $i+j$ as $(i+j) \mod (p+1)$. For simplicity, we shall continue to
	denote the index $l$ when $l>p+1$ instead of $l\mod (p+1)$. So both $A^{(0)}$ and $A^{(p+1)}$ are diagonal 
	matrices.
	
	The Ansatz $W$ is invariant under the following transformation
	$$
	r: W(\la)\mapsto P W(\sigma \la) P^{-1} ,
	$$
	where $P$ is a diagonal $(p+1)\times (p+1)$ matrix given by $P_{ii}=\sigma^{i}$ and
	$\sigma=e^{2\pi i/(p+1)}$ since we have $P^{-1} A^{(i)} P=\sigma^i A^{(i)}$. The transformation satisfies
	$r^{p+1} = id $.
	
	The zero curvature condition \eqref{zero} for $\bar B(\la)$ leads to
	a formula for computing a recursion operator as follows:
	\begin{equation}\label{rel}
	C_{t_{n+1}}=\cS(\bar B(\la))\ C(\la)-C(\la) \bar B(\la) =\la^{p+1} C_{t_n} + \cS(W) C-C W.
	\end{equation}
	Substituting \eqref{lax} and \eqref{mw} into \eqref{rel} and collecting the
	coefficients of this $\la$-polynomial, we obtain
	\begin{align}
	&\la^{p+2}:\quad     \cS(A^{(0)}) U^{(0)}-U^{(0)} A^{(0)} =0; \label{a0}\\
	&\la^{p+1}: \quad   U^{(1)} _{t_n}+\cS(A^{(1)}) U^{(0)}-U^{(0)}
	A^{(1)}+\cS(A^{(0)})
	U^{(1)}-U^{(1)} A^{(0)}=0;\label{a1}\\
	&\la^{p+1-i}:   \cS(A^{(i+1)}) U^{(0)}-U^{(0)} A^{(i+1)}+\cS(A^{(i)})
	U^{(1)}-U^{(1)} A^{(i)}=0, 1\leq i\leq p;\nonumber\\
	&\la^{0}: \qquad    U^{(1)}_{t_{n+1}}=\cS(A^{(p+1)}) U^{(1)}-U^{(1)}
	A^{(p+1)}.\label{ap}
	\end{align}
	\begin{lemma}\label{lem1}
		Assume that $p\geq 2$ and $ 1\leq i\leq p-1$. The matrix equations 
		\begin{equation}\label{matrixeq}
		\cS(A^{(i+1)}) U^{(0)}-U^{(0)} A^{(i+1)}+\cS(A^{(i)}) U^{(1)}-U^{(1)} A^{(i)}=0 
		\end{equation}
		are equivalent to 
		\begin{align}\label{reai1}
		&\cS(a^{(i+1)}_{i+2,1})+\cS(a^{(i)}_{i+2,2})-a_{1+i,1}^{(i)}=0
		;\\\label{reai2}
		&-a^{(i+1)}_{1,p+1-i}+\cS(a^{(i)}_{1,p+2-i})+\l_u
		a_{p+1,p+1-i}^{(i)}=0;\\\label{reai3}
		&-\r_u \cS(a^{(i)}_{i+1,1})-a_{i,p+1}^{(i)}=0;\\\label{reai4}
		&\cS(a^{(i)}_{l+i+1,l+1})-a_{l+i,l}^{(i)}=0,\quad 2\leq l\leq
		p,\quad l\neq p+1-i \ .
		\end{align}
	\end{lemma}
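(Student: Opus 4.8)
The plan is to prove the equivalence by a direct, entry-by-entry expansion of the four matrix products in \eqref{matrixeq}, exploiting the extreme sparsity of $U^{(0)}$, $U^{(1)}$, $A^{(i)}$ and $A^{(i+1)}$. First I would write everything in terms of matrix units $E_{ab}$ with entries in $\A$, so that $(\alpha E_{ab})(\beta E_{cd})=\delta_{bc}(\alpha\beta)E_{ad}$: here $U^{(0)}=E_{11}$, $U^{(1)}=\sum_{k=1}^{p}E_{k+1,k}-u\,E_{1,p+1}$ and $A^{(i)}=\sum_{j=1}^{p+1}a^{(i)}_{j+i,j}E_{j+i,j}$ (first index mod $p+1$). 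Multiplying out, one finds that $\cS(A^{(i+1)})U^{(0)}$ has its single nonzero entry $\cS(a^{(i+1)}_{i+2,1})$ at position $(i+2,1)$; that $U^{(0)}A^{(i+1)}$ has its single nonzero entry $a^{(i+1)}_{1,p+1-i}$ at position $(1,p+1-i)$; that $\cS(A^{(i)})U^{(1)}=\sum_{k=1}^{p}\cS(a^{(i)}_{k+1+i,k+1})E_{k+1+i,k}-\r_u\cS(a^{(i)}_{i+1,1})\,E_{i+1,p+1}$; and that $U^{(1)}A^{(i)}=\sum_{j\neq p+1-i}a^{(i)}_{j+i,j}E_{j+i+1,j}-\l_u a^{(i)}_{p+1,p+1-i}\,E_{1,p+1-i}$, all indices read mod $p+1$. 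This step is where the explicit form of the companion-type matrices $U^{(0)},U^{(1)}$ and the cyclic structure of $A^{(i)}$ are used.

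Next I would collect the resulting scalar identities column by column of the $(p+1)\times(p+1)$ matrix. For a generic column $l$ with $2\leq l\leq p$ and $l\neq p+1-i$, only $\cS(A^{(i)})U^{(1)}$ and $U^{(1)}A^{(i)}$ contribute, they land in the common row $l+i+1$, and subtracting gives exactly \eqref{reai4}. Column $1$ collects the entries of $\cS(A^{(i+1)})U^{(0)}$, $\cS(A^{(i)})U^{(1)}$ and $U^{(1)}A^{(i)}$, all in row $i+2$, yielding \eqref{reai1}; column $p+1-i$ collects the entries of $U^{(0)}A^{(i+1)}$, $\cS(A^{(i)})U^{(1)}$ and the $u$-part of $U^{(1)}A^{(i)}$, all in row $1$, yielding \eqref{reai2}; and column $p+1$ collects the $u$-part of $\cS(A^{(i)})U^{(1)}$ and the $j=p+1$ term of $U^{(1)}A^{(i)}$, both in row $i+1$, yielding \eqref{reai3}. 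Since each of the $p+1$ columns is accounted for exactly once and carries a single scalar identity, the vanishing of the matrix in \eqref{matrixeq} is equivalent to the simultaneous validity of \eqref{reai1}–\eqref{reai4}; this is the matrix bookkeeping that replaces the commutative computation of \cite{wang12}.

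The routine but delicate point — and the main obstacle — is the interplay of the modular index arithmetic with the placement of the operators $\l_u$ and $\r_u$. One must verify that, under the convention that indices exceeding $p+1$ are reduced mod $p+1$, the rows produced by $\cS(A^{(i)})U^{(1)}$ and by $U^{(1)}A^{(i)}$ in each column genuinely coincide (in particular at the wrap-around $l=p$, where the column index $l+1$ becomes $p+1$), that the hypothesis $1\leq i\leq p-1$ guarantees the three exceptional columns $1$, $p+1-i$, $p+1$ are pairwise distinct and disjoint from the generic range, and that the $-u$ sitting in the $(1,p+1)$ slot of $U^{(1)}$ is multiplied on the correct side, so that the $u$-dependent contributions emerge as $\l_u a^{(i)}_{p+1,p+1-i}$ and $\r_u\cS(a^{(i)}_{i+1,1})$ and not with $u$ on the wrong side. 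Once this bookkeeping is laid out carefully, all four identities are immediate and the lemma follows.
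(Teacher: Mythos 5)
Your proposal is correct and follows essentially the same route as the paper's proof: a direct expansion of the four matrix products using the sparsity of $U^{(0)}$, $U^{(1)}$ and the cyclic structure of $A^{(i)}$, identifying the nonzero entries of each product and matching them position by position to recover \eqref{reai1}--\eqref{reai4}. Your column-by-column bookkeeping and the explicit check that the exceptional columns $1$, $p+1-i$, $p+1$ are pairwise distinct for $1\leq i\leq p-1$ only make explicit what the paper leaves implicit.
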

	\begin{proof}. We directly compute the multiplications of matrices and write
		out their non-zero entries, respectively:
		\begin{eqnarray*}
			&&( \cS(A^{(i+1)}) U^{(0)})_{i+2,1}=\cS(a^{(i+1)}_{i+2,1})
			u^{(0)}_{11}=\cS(a^{(i+1)}_{i+2,1});\\
			&&(U^{(0)} A^{(i+1)})_{1,p+1-i}=u^{(0)}_{11}
			a^{(i+1)}_{1,p+1-i}=a^{(i+1)}_{1,p+1-i};\\
			&&(\cS(A^{(i)}) U^{(1)})_{i+1,p+1}=\cS(a^{(i)}_{i+1,1}) u_{1,p+1}^{(1)}=-\r_u
			\cS(a^{(i)}_{i+1,1});\\
			&&(\cS(A^{(i)}) U^{(1)})_{l+i+1,l}=\cS(a^{(i)}_{l+i+1,l+1})
			u_{l+1,l}^{(1)}=\cS(a^{(i)}_{l+i+1,l+1}),\quad 1\leq l\leq p;\\
			&&(U^{(1)} A^{(i)})_{1,p+1-i}=u_{1,p+1}^{(1)} a_{p+1,p+1-i}^{(i)}=-\l_u
			a_{p+1,p+1-i}^{(i)} ;\\
			&&(U^{(1)} A^{(i)})_{l+i+1,l}=u_{l+i+1,l+i}^{(1)}
			a_{l+i,l}^{(i)}=a_{l+i,l}^{(i)},\quad 1\leq l\leq p+1, \quad l\neq p+1-i;
		\end{eqnarray*}
		We are now ready to write out the entries for the matrix equations, which leads to the formulas
		stated in the lemma.
	\end{proof}
	
	Specifically, when $i=0$, using the proof of Lemma \ref{lem1} we obtain the equivalent conditions for 
	the matrix equation \eqref{matrixeq} as follows:
	\begin{align}
	&\cS(a^{(1)}_{2,1})+\cS(a^{(0)}_{2,2})-a_{1,1}^{(0)}=0;
	\label{rea00}\\
	&-a^{(1)}_{1,p+1}+\cS(a^{(0)}_{1,1})+\l_u
	a_{p+1,p+1}^{(0)}=0;\label{rea01}\\
	&\cS(a^{(0)}_{l+1,l+1})-a_{l,l}^{(0)}=0,\quad 2\leq l\leq
	p \ .\label{rea02}
	\end{align}
	In the similar way, the matrix equation \eqref{matrixeq} for $i=p$ is equivalent to
	\begin{align}
	&\cS(a^{(p+1)}_{1,1})-a^{(p+1)}_{1,1}+\cS(a^{(p)}_{1,2})+\l_u
	a_{p+1,1}^{(p)}=0;\label{reap0}\\
	&-\r_u \cS(a^{(p)}_{p+1,1})-a_{p,p+1}^{(p)}=0;\label{reap1}\\
	&\cS(a^{(p)}_{l,l+1})-a_{l-1,l}^{(p)}=0,\quad 2\leq l\leq p \ .\label{reap2}
	\end{align}
	Notice that formula \eqref{rea00}, \eqref{rea01}, \eqref{reap0} and \eqref{reap1} are valid for $p=1$.
	
	Using formula \eqref{reai1}--\eqref{reai4} in Lemma \ref{lem1}, we can now find the relation between $a^{(i+1)}_{i+2,1}$ 
	and $a^{(i)}_{i+1,1}$.
	\begin{lemma}\label{lem2}
		Assume that $p\geq 2$ and $1\leq i \leq p-1$. We have
		\begin{equation}\label{keyre}
		a^{(i+1)}_{i+2,1}=\cS^{-1} (\cS^{i} \r_u-\l_u \cS^{i-p})^{-1} (\cS^{i} \r_u  -\l_u
		\cS^{i-p-1}) \cS  (a^{(i)}_{i+1,1}) .
		\end{equation}
	\end{lemma}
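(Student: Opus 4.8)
The plan is to use \eqref{reai1} to express $\cS\,a^{(i+1)}_{i+2,1}$ through $a^{(i)}_{i+1,1}$ and the single adjacent entry $a^{(i)}_{i+2,2}$, and then to eliminate $a^{(i)}_{i+2,2}$ by propagating the remaining relations \eqref{reai2}--\eqref{reai4} around the whole cyclic subdiagonal of $A^{(i)}$ (and of $A^{(i+1)}$). It is convenient to abbreviate $b^{(i)}_l:=a^{(i)}_{l+i,l}$ for $l\in\Z/(p+1)$; these are precisely the nonvanishing entries of $A^{(i)}$, and in this notation \eqref{reai1}--\eqref{reai4} read
\begin{gather*}
\cS b^{(i+1)}_1=b^{(i)}_1-\cS b^{(i)}_2,\qquad b^{(i)}_{p+1}=-\r_u\cS b^{(i)}_1,\qquad b^{(i)}_l=\cS b^{(i)}_{l+1}\quad(2\leq l\leq p,\ l\neq p+1-i),\\
b^{(i+1)}_{p+1-i}=\cS b^{(i)}_{p+2-i}+\l_u b^{(i)}_{p+1-i},
\end{gather*}
where in the last line we used $a^{(i+1)}_{1,p+1-i}=b^{(i+1)}_{p+1-i}$.

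First I would iterate the third relation along $l=2,3,\dots,p-i$ to obtain $b^{(i)}_2=\cS^{p-1-i}b^{(i)}_{p+1-i}$, equivalently $b^{(i)}_{p+1-i}=\cS^{i+1-p}b^{(i)}_2$. Likewise, iterating it along $l=p+2-i,\dots,p$ and then applying $b^{(i)}_{p+1}=-\r_u\cS b^{(i)}_1$ gives $b^{(i)}_{p+2-i}=-\cS^{i-1}\r_u\cS b^{(i)}_1$. Running the identical computation at level $i+1$ — where the excluded index is $p-i$, strictly below all the indices that occur, so it does not interfere — yields $b^{(i+1)}_{p+1-i}=-\cS^{i}\r_u\cS b^{(i+1)}_1=-\cS^{i}\r_u\cS\,a^{(i+1)}_{i+2,1}$. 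Substituting these three identities into the last relation displayed above produces
\[
-\cS^{i}\r_u\cS\,a^{(i+1)}_{i+2,1}=-\cS^{i}\r_u\cS\,a^{(i)}_{i+1,1}+\l_u\cS^{i+1-p}a^{(i)}_{i+2,2}.
\]

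Finally I would remove $a^{(i)}_{i+2,2}$ via \eqref{reai1} in the form $\cS\,a^{(i)}_{i+2,2}=a^{(i)}_{i+1,1}-\cS\,a^{(i+1)}_{i+2,1}$, so that $\cS^{i+1-p}a^{(i)}_{i+2,2}=\cS^{i-p}a^{(i)}_{i+1,1}-\cS^{i-p+1}a^{(i+1)}_{i+2,1}$. Collecting the $a^{(i+1)}_{i+2,1}$-terms on the left and the $a^{(i)}_{i+1,1}$-terms on the right, the operator coefficients assemble into $(\cS^{i}\r_u-\l_u\cS^{i-p})\cS$ and $(\cS^{i}\r_u-\l_u\cS^{i-p-1})\cS$ respectively; left-multiplying by $(\cS^{i}\r_u-\l_u\cS^{i-p})^{-1}$ and then by $\cS^{-1}$ gives exactly \eqref{keyre}.

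The content is entirely this bookkeeping; the only places requiring care are the cyclic index arithmetic modulo $p+1$ and the degenerate cases $i=1$ and $i=p-1$, in which one or both of the two chains used above shrink to a trivial identity (one reads an empty product of shifts as the identity operator), and in which, for $i=p-1$, the level-$p$ relations \eqref{reap0}--\eqref{reap2} replace \eqref{reai2}--\eqref{reai4} but have exactly the same shape on the entries that occur, so the argument is unchanged. A direct check for $p=2$ and $p=3$ confirms the formula in these boundary cases.
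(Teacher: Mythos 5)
Your proof is correct and follows essentially the same route as the paper: both iterate the chain relations \eqref{reai4} to obtain $a^{(i)}_{1,p+2-i}=-\cS^{i-1}\r_u\cS(a^{(i)}_{i+1,1})$ and $a^{(i)}_{i+2,2}=\cS^{p-1-i}(a^{(i)}_{p+1,p+1-i})$ (your $b^{(i)}_{p+2-i}$ and $b^{(i)}_2$), apply the first identity at level $i+1$, and then eliminate the remaining unknown between \eqref{reai1} and \eqref{reai2} --- you merely perform the two eliminations in the opposite order. Your explicit remark that for $i=p-1$ the level-$p$ relations \eqref{reap1}--\eqref{reap2} must be substituted for \eqref{reai3}--\eqref{reai4} is a point the paper only addresses after the lemma, so it is a welcome clarification rather than a deviation.
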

	\begin{proof}First using formula \eqref{reai4}, we can show that 
		\begin{equation}\label{a1pi}
		a_{1,p+2-i}^{(i)}=\cS^{i-1} (a_{i,p+1}^{(i)}) .
		\end{equation}
		Indeed, if we take $l=p+2-i+r$ in \eqref{reai4}, it follows that
		$$a_{1+r,p+2-i+r}^{(i)}=\cS(a^{(i)}_{2+r,p+3-i+r}), \quad 0\leq r \leq i-2 .
		$$
		Thus we recursively obtain \eqref{a1pi}, that is,
		$$a_{1,p+2-i}^{(i)}=\cS(a^{(i)}_{2,p+3-i})=\cdots =\cS^{i-1} (a_{i,p+1}^{(i)}) .
		$$ 
		From \eqref{reai3}, it leads to $a_{i,p+1}^{(i)}=-\r_u \cS(a^{(i)}_{i+1,1})$. Substituting it into to \eqref{a1pi}, we get 
		\begin{equation}\label{sai}
		a_{1,p+2-i}^{(i)} =-\cS^{i-1} \r_u \cS(a^{(i)}_{i+1,1}). 
		\end{equation}
		Similar as the proof of \eqref{a1pi}, by taking $ l=2+r$, where $0\leq r\leq p-2-i$ in \eqref{reai4} when $p\geq 2$, we recursively show that
		\begin{equation}\label{saip}
		a_{2+i,2}^{(i)} =\cS(a^{(i)}_{3+i,3})=\cdots =\cS^{p-1-i}
		(a_{p+1,p+1-i}^{(i)}).
		\end{equation}
		Now we substitute \eqref{saip} into \eqref{reai1} and it leads to
		\begin{equation}\label{pi}
		\cS(a^{(i+1)}_{i+2,1})+\cS^{p-i} (a_{p+1,p+1-i}^{(i)})-a_{1+i,1}^{(i)}=0.
		\end{equation}
		Formula \eqref{sai} is valid for different values of $1\leq i\leq p-1$. Thus we have
		$a_{1,p+1-i}^{(i+1)} =-\cS^{i} \r_u \cS(a^{(i+1)}_{i+2,1})$.  We substitute it and \eqref{sai} into \eqref{reai2} and it becomes 
		\begin{equation}\label{pii}
		\cS^{i} \r_u \cS(a^{(i+1)}_{i+2,1})-\cS^{i} \r_u
		\cS(a^{(i)}_{i+1,1})+\l_u a_{p+1,p+1-i}^{(i)}=0 .
		\end{equation}
		From \eqref{pi} and \eqref{pii}, we eliminate $a_{p+1,p+1-i}^{(i)}$ and obtain
		the relation between $a^{(i+1)}_{i+2,1}$ and $a^{(i)}_{i+1,1}$, that is,
		\begin{eqnarray*}
			a^{(i+1)}_{i+2,1}=(\cS^{i} \r_u \cS-\l_u \cS^{i+1-p})^{-1} (\cS^{i} \r_u \cS -\l_u
			\cS^{i-p})   (a^{(i)}_{i+1,1}),
		\end{eqnarray*}
		which is equivalent to \eqref{keyre} as written in this lemma.
	\end{proof}
	
	From \eqref{reap1}) and \eqref{reap2}, we can see that \eqref{sai} is also valid  for $i=p$. Substituting it into
	\eqref{reap0}, we obtain 
	$$(\cS-1)a^{(p+1)}_{1,1}=-\cS(a^{(p)}_{1,2})-\l_u a_{p+1,1}^{(p)} =(\cS^{p} \r_u
	\cS -\l_u) (a^{(p)}_{p+1,1}).
	$$
	Therefore,
	\begin{equation}\label{keyp}
	a^{(p+1)}_{1,1}=(\cS-1)^{-1} (\cS^{p} \r_u -\l_u\cS^{-1})\cS (a^{(p)}_{p+1,1}).
	\end{equation}
	
	We are now ready to compute the recursion operators for any $p\in\bbbn$.
	\begin{theorem}\label{th1}
		A recursion operator of Narita-Itoh-Bogoyavlensky lattice 
		\eqref{bogadd} is
		\begin{equation}\label{Re}
		\Re=(\r_u \cS -\l_u \cS^{-p}) (\cS -1)^{-1} \prod_{i=1}^{\rightarrow{p}} (\cS^{p+1-i} \r_u-\l_u\cS^{-i} )
		(\cS^{p-i} \r_u-\l_u \cS^{-i})^{-1} \ .
		\end{equation}
	\end{theorem}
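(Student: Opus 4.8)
The plan is to read the recursion operator off directly from the $\lambda$-graded system \eqref{a0}--\eqref{ap}, eliminating every auxiliary matrix entry $a^{(i)}_{kl}$ in favour of $u_\tau$, exactly along the lines of the commutative treatment in \cite{wang12}; the one genuinely new feature is that every $u$ must be recorded according to whether it multiplies from the left or the right, i.e.\ the operators $\l_u$ and $\r_u$ must be kept strictly apart (and the nonlocal factors understood in the noncommutative pseudo-difference calculus). The backbone of the elimination is the string of corner entries $b_i:=a^{(i)}_{i+1,1}$, $i=1,\dots,p$, together with $a^{(p+1)}_{1,1}$: Lemma \ref{lem2} already carries $b_i\mapsto b_{i+1}$ for $1\le i\le p-1$ and \eqref{keyp} carries $b_p\mapsto a^{(p+1)}_{1,1}$, so what remains is to tie the two ends of this string to $u_\tau$ and to $u_t$ and then to reorganise the resulting ordered product into the form \eqref{Re}.

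First I would handle the ``top'' end using the $\lambda^{0}$-equation \eqref{ap}. Its off-diagonal components give $\cS\big(a^{(p+1)}_{j+1,j+1}\big)=a^{(p+1)}_{j,j}$ for $1\le j\le p$, so that $a^{(p+1)}_{p+1,p+1}=\cS^{-p}a^{(p+1)}_{1,1}$, and the $(1,p+1)$-component then collapses to $u_t=\big(\r_u\cS-\l_u\cS^{-p}\big)a^{(p+1)}_{1,1}$. Feeding in \eqref{keyp} produces the outer block $(\r_u\cS-\l_u\cS^{-p})(\cS-1)^{-1}(\cS^{p}\r_u-\l_u\cS^{-1})\cS$ acting on $b_p$ — that is, the prefactor of \eqref{Re} together with the leftmost first-order factor $(\cS^{p}\r_u-\l_u\cS^{-1})$ of its product. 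For the ``bottom'' end I would use \eqref{a0} and \eqref{a1}: \eqref{a0} forces $a^{(0)}_{1,1}$ to be a lattice constant, which is a free integration constant and may be set to $0$; the components \eqref{rea00} and \eqref{rea02} of \eqref{a1} then give $a^{(0)}_{p+1,p+1}=-\cS^{1-p}b_1$; \eqref{reai3} (or \eqref{reap1} when $p=1$) gives $a^{(1)}_{1,p+1}=-\r_u\cS\,b_1$; and substituting these into the $(1,p+1)$-component of \eqref{a1} yields $\big(\r_u\cS-\l_u\cS^{1-p}\big)b_1=u_\tau$, i.e.\ $\cS\,b_1=(\r_u-\l_u\cS^{-p})^{-1}u_\tau$.

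Next I would iterate \eqref{keyre} for $i=1,\dots,p-1$ and compose everything. Here the convenient fact is that the conjugating shifts $\cS^{-1}(\cdots)\cS$ surrounding each application of \eqref{keyre} cancel against their neighbours, so that $\cS\,b_p$ becomes a single ordered product of first-order factors applied to $\cS\,b_1$. Using the commutation rules $\cS^k\r_u=\r_{u_k}\cS^k$ and $\l_u\cS^k=\cS^k\l_{u_{-k}}$, each factor of the whole composite can be brought to the normalised shape $\r_u-\l_{u_m}\cS^{-p}$ or $\r_u-\l_{u_m}\cS^{-p-1}$; the residual shifts telescope, and the substitution $i\mapsto p+1-i$ matches the composite with $(\r_u\cS-\l_u\cS^{-p})(\cS-1)^{-1}\prod_{i=1}^{\rightarrow p}(\cS^{p+1-i}\r_u-\l_u\cS^{-i})(\cS^{p-i}\r_u-\l_u\cS^{-i})^{-1}$, which is \eqref{Re}. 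Since the $\tau$- and $t$-flows are, by construction of the ansatz \eqref{mw}, consecutive members of the symmetry hierarchy, the identity $u_t=\Re\,u_\tau$ is exactly the assertion of the theorem.

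I expect the only delicate point to be this last reorganisation: making sure the ordered product coming out of \eqref{keyre}, \eqref{keyp}, \eqref{a1} and \eqref{ap} agrees term by term with the product in \eqref{Re} after the shift-telescoping and the index reversal, and in particular dealing with the two ends $i=1$ and $i=p$ of the string, where \eqref{keyre} is not available and one has to fall back on the special relations \eqref{rea00}--\eqref{rea02} and \eqref{reap0}--\eqref{reap2} (recalling, as noted just after \eqref{reap2}, that \eqref{rea00}, \eqref{rea01}, \eqref{reap0} and \eqref{reap1} survive at $p=1$). As consistency checks I would verify that the resulting $\Re$ does not depend on the value chosen for the constant $a^{(0)}_{1,1}$, and that for $p=1$ it reduces to the Volterra recursion operator \eqref{eq:recVolterra}.
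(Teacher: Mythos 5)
Your proposal is correct and follows essentially the same route as the paper's proof: choosing $a^{(0)}_{11}=0$, tying $a^{(0)}_{p+1,p+1}$ and $a^{(1)}_{2,1}$ to $u_\tau$ via \eqref{rea00}--\eqref{rea02} and the $(1,p+1)$-component of \eqref{a1}, tying $u_t$ to $a^{(p+1)}_{1,1}$ via \eqref{ap} and \eqref{keyp}, and chaining the corner entries with Lemma \ref{lem2} before regrouping the ordered product into \eqref{Re}. The only difference is cosmetic (you work from the $u_t$ end downward, the paper works from the $u_\tau$ end upward), and the ``delicate point'' you flag about telescoping and re-indexing the product is precisely the final bookkeeping step the paper carries out.
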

	Since the difference operators are not commuting, here we use the notation 
	$\prod_{i=1}^{\rightarrow{p}}$ to denote the order of the value $i$, from $1$ to $p$, that is,
	$\prod_{i=1}^{\rightarrow{p}}a_i=a_1 a_2 \cdots a_p$ and
	$\prod_{i=1}^{\leftarrow{p}}a_i=a_p a_{p-1} \cdots a_1$. Later we use the same notation for difference polynomials.
	
	\begin{proof} First from \eqref{rea02}, we can show that
		\begin{equation}\label{A0}
		a^{(0)}_{ll}=\cS^{p+1-l}(a^{(0)}_{p+1,p+1}),\quad 2\leq l \leq p+1 .
		\end{equation}
		The next identity \eqref{a0}) leads to $(\cS-1)a^{(0)}_{11}=0$. Here we choose the solution
		$a^{(0)}_{11}=0$, which makes it possible to find the relation between $u_{t_{n+1}}$ and $u_{t_n}$ .
		We now substitute this into \eqref{rea00} and using \eqref{A0} we obtain
		\begin{equation}\label{A1p1}
		a^{(1)}_{2,1}=-a^{(0)}_{2,2}=-\cS^{p-1}(a^{(0)}_{p+1,p+1}) .
		\end{equation}
		It follows from \eqref{a1} and \eqref{rea01} that
		\begin{eqnarray*}
			-u_{t_n}-a^{(1)}_{1,p+1}+\l_u a_{p+1,p+1}^{(0)}=0 .
		\end{eqnarray*}
		We now show that $a^{(1)}_{1,p+1}=-u \cS (a^{(1)}_{2,1})$ for all $p\in\bbbn$. When $p\geq 2$, 
		we have $a^{(1)}_{1,p+1}=-\r_u \cS (a^{(1)}_{2,1})$ by taking $i=1$ in \eqref{reai3}. With $p=1$, this is a result from  \eqref{reap1}. Therefore for all $p\in \bbbn$, using \eqref{A1p1} we have
		$a^{(1)}_{1,p+1}=\r_u \cS^p (a^{(0)}_{p+1,p+1})$ . This leads to
		\begin{eqnarray*}
			u_{t_n}+(\r_u \cS^{p}-\l_u) a_{p+1,p+1}^{(0)}=0.
		\end{eqnarray*}
		Therefore, we have
		\begin{equation}\label{ret}
		a_{p+1,p+1}^{(0)}=(\l_u-\r_u \cS^{p})^{-1} u_{t_n} \ .
		\end{equation}
		From \eqref{ap}, we find that
		\begin{eqnarray*}
			\left\{\begin{array}{l}
				-u_{t_{n+1}}=-\r_u\cS(a^{(p+1)}_{1,1})+\l_u a_{p+1,p+1}^{(p+1)};\\
				\cS(a^{(p+1)}_{l+1,l+1})=a_{l,l}^{(p+1)},\quad 1\leq l\leq p \ .\end{array}\right.
		\end{eqnarray*}
		This implies that
		\begin{eqnarray*}
			u_{t_{n+1}}= (\r_u \cS -\l_u \cS^{-p}) (a^{(p+1)}_{1,1}).
		\end{eqnarray*}
		Using \eqref{keyp} we rewrite it as
		\begin{eqnarray*}
			u_{t_{n+1}}= (\r_u \cS -\l_u \cS^{-p}) (\cS-1)^{-1} (\cS^{p} \r_u -\l_u\cS^{-1} )\cS
			(a^{(p)}_{p+1,1}). 
		\end{eqnarray*}
		Using Lemma \ref{lem2} when $p\geq 2$, we obtain
		\begin{eqnarray*}
			u_{t_{n+1}}=(\r_u \cS\! -\!\l_u \cS^{-p}) (\cS\!-\!1)^{-1} (\cS^{p} \r_u \! -\!\l_u\cS^{-1})
			\prod_{i=1}^{\rightarrow{(p-1)}} (\cS^{p-i} \r_u\!-\!\l_u \cS^{-i})^{-1} (\cS^{p-i} \r_u \! -\!\l_u
			\cS^{-i-1}) \cS (a^{(1)}_{2,1})\ .
		\end{eqnarray*}
		This is also valid for $p=1$ under the convention that the empty product is equal to $1$.
		Finally, using \eqref{A1p1} and \eqref{ret}, we get the relation between
		$u_{t_{n+1}}$ and $u_{t_n}$, which gives rise to the recursion operator as
		stated in this theorem.
		%
	\end{proof}
	
	When $p=1$, it follows from Theorem \ref{th1} that
	\begin{eqnarray*}
		\Re=(\r_u \cS -\l_u \cS^{-1}) (\cS -1)^{-1}  (\cS \r_u-\l_u\cS^{-1} )
		( \r_u-\l_u \cS^{-1})^{-1} \ ,
	\end{eqnarray*}
	which is the recursion operator for the Nonabelian Volterra chain we have already computed in Section \ref{ssec:Volterra} (compare with \eqref{eq:recVolterra}).

	\subsection{Locality of symmetries}\label{sec4}
	The recursion operators we have obtained in Theorem \ref{th1} have nonlocal terms. One important question
	is whether the operator is guaranteed to generate local symmetries
	starting from a proper seed. 
	Sufficient conditions for weakly nonlocal Nijenhuis
	differential operators are formulated in \cite{mr1974732, serg5},
	which are also valid for  weakly nonlocal Nijenhuis
	difference operators \cite{mwx2}. This result is generalised to Nijenhuis
	operators, which are the product of weakly nonlocal Hamiltonian 
	and symplectic operators \cite{wang09}. 
	However, the recursion operator given in Theorem \ref{th1} is not
	weakly nonlocal. In this section, we are going to directly prove the
	locality of symmetries by induction. 
	
	To do so, we first introduce a family of homogeneous difference polynomials of
	degree $l$ with respect to  the dependent  variable $u$ and its shifts
	\begin{equation}\label{sfunc}
	\cP^{(l,k)}=\sum_{0\leq \lambda_{l-1}\leq \cdots \leq \lambda_0\leq k}
	\left(\prod_{j=0}^{\leftarrow{(l-1)}} u_{\lambda_j+j p} \right),
	\end{equation}
	where $k\geq 0, l\geq 1$ and $p\geq 1$ are all integers. In particular, for
	any $p\in\bbbn$ we have
	\begin{equation}\label{sfuncp}
	\cP^{(1,k)}= \sum_{j=0}^k u_j \quad \mbox{and} \quad \cP^{(l,0)}=u_{(l-1)p} u_{(l-2)p} \cdots u_p u \ .
	\end{equation}
	Equation \eqref{bogadd} can be written in terms of these polynomials as follows:
	\begin{equation}\label{bogfun}
	u_t=(\r_u \cS-\l_u\cS^{-p})\cP^{(1,p-1)} .
	\end{equation}
	
	This family of polynomials was first defined in \cite{svin09}, where the author
	amazingly gave the explicit expressions for the  hierarchy of symmetries of equation
	\eqref{bogadd} in terms of it. We are going to show that the recursion operator
	\eqref{Re} generates the same symmetries as in this paper modulo a sign. 
	
	It is easy to see that the polynomials \eqref{sfunc} possess the following
	properties \cite{svin09,svin11}:
	\begin{align}
	&\cP^{(l,k)}-\cP^{(l,k-1)}=\r_{u_k} \cS^p (\cP^{(l-1,k)});\label{prop1}\\
	&\cP^{(l,k)}-\cS(\cP^{(l,k-1)})=\l_{u_{(l-1)p}} \cP^{(l-1,k)}.\label{prop2}
	\end{align}
	These immediately lead to 
	\begin{equation}
	(\cS-1) \cP^{(l,k)}=\left( \cS \r_{u_k} \cS^p-\l_{u_{(l-1)p}}\right) \cP^{(l-1,k)}.\label{prop3}
	\end{equation}
	We now prove another important property.
	\begin{proposition}\label{pro}
		For all $l, p\in \bbbn$, we have
		\begin{equation}\label{imp}
		(\cS^{p-i} \r_u-\l_u \cS^{-i})\cS^{-lp+i} \cP^{(l,(l+1)p-i)}= (\cS^{p-i}
		\r_u-\l_u\cS^{-(i+1)} ) \cS^{-lp+i+1} \cP^{(l,(l+1)p-i-1)}, \ \ 0\leq i \leq p .
		\end{equation}
	\end{proposition}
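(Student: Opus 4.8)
The plan is to prove \eqref{imp} by a direct computation that reduces both sides to the single expression $u\,(\cS^{(1-l)p}Q)\,u_{p-i}$, where I abbreviate
\[
A:=\cP^{(l,(l+1)p-i)},\qquad B:=\cP^{(l,(l+1)p-i-1)},\qquad Q:=\cP^{(l-1,(l+1)p-i)}.
\]
The only ingredients are the two elementary recursions \eqref{prop1} and \eqref{prop2}, applied with $k=(l+1)p-i$ and level $l$, which give
\[
A-B=\r_{u_{(l+1)p-i}}\cS^p Q,\qquad A-\cS B=\l_{u_{(l-1)p}}Q,
\]
together with the commutation rules $\cS^j\r_f=\r_{\cS^j f}\cS^j$ and $\cS^j\l_f=\l_{\cS^j f}\cS^j$ for the shift automorphism.

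First I would expand the two operators appearing in \eqref{imp} by pushing every shift to the left. Since $\cS^{p-i}\r_u\cS^{-lp+i}A=(\cS^{(1-l)p}A)\,u_{p-i}$ and $\l_u\cS^{-i}\cS^{-lp+i}A=u\,(\cS^{-lp}A)$, the left-hand side of \eqref{imp} equals $(\cS^{(1-l)p}A)\,u_{p-i}-u\,(\cS^{-lp}A)$; the same computation turns the right-hand side into $(\cS^{(1-l)p+1}B)\,u_{p-i}-u\,(\cS^{-lp}B)$. Subtracting and using $\cS^{(1-l)p+1}B=\cS^{(1-l)p}(\cS B)$, the identity \eqref{imp} becomes equivalent to
\[
\cS^{(1-l)p}(A-\cS B)\,u_{p-i}=u\,\cS^{-lp}(A-B).
\]

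Finally I would substitute the two recursions into this identity. On the left, $\cS^{(1-l)p}\bigl(\l_{u_{(l-1)p}}Q\bigr)=\l_{\cS^{(1-l)p}u_{(l-1)p}}\cS^{(1-l)p}Q=\l_u\,\cS^{(1-l)p}Q$, because $\cS^{(1-l)p}u_{(l-1)p}=u$; on the right, $\cS^{-lp}\bigl(\r_{u_{(l+1)p-i}}\cS^p Q\bigr)=\r_{\cS^{-lp}u_{(l+1)p-i}}\cS^{(1-l)p}Q=\r_{u_{p-i}}\cS^{(1-l)p}Q$, because $\cS^{-lp}u_{(l+1)p-i}=u_{p-i}$. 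Both sides then read $u\,(\cS^{(1-l)p}Q)\,u_{p-i}$, proving the proposition; for $l=1$ one uses the convention $Q=\cP^{(0,\cdot)}=1$, and the boundary values $i=0$ and $i=p$ are covered by the same computation. I do not expect a genuine obstacle here beyond careful shift bookkeeping: the step to watch is precisely that the constant shifts $\cS^{(1-l)p}$ and $\cS^{-lp}$ applied to the ``extremal'' factors $u_{(l-1)p}$ and $u_{(l+1)p-i}$ collapse exactly to $u$ and $u_{p-i}$, which is what makes the two sides coincide.
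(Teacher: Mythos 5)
Your proof is correct and is essentially the paper's own argument: both push the shifts through the multiplication operators, apply \eqref{prop1} and \eqref{prop2} with $k=(l+1)p-i$, and observe that the two sides collapse to the common expression $u\,(\cS^{(1-l)p}\cP^{(l-1,(l+1)p-i)})\,u_{p-i}$ (the paper phrases this as the difference vanishing because $\r_{u_{p-i}}$ and $\l_u$ commute). The only differences are notational.
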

	\begin{proof}. Let us compute the difference between the left-hand side and the
		right-hand side of the identity \eqref{imp} using the properties
		\eqref{prop1} and \eqref{prop2}:
		\begin{eqnarray*} 
			&&\quad (\cS^{p-i} \r_u-\l_u \cS^{-i})\cS^{-lp+i} \cP^{(l,(l+1)p-i)}- (\cS^{p-i}
			\r_u-\l_u\cS^{-(i+1)} ) \cS^{-lp+i+1} \cP^{(l,(l+1)p-i-1)}\\
			&&=\r_{u_{p-i}} \cS^{-lp +p} \left( \cP^{(l,(l+1)p-i)} -\cS
			\cP^{(l,(l+1)p-i-1)} \right)-\l_u \cS^{-lp} \left(  \cP^{(l,(l+1)p-i)}- 
			\cP^{(l,(l+1)p-i-1)}\right)\\
			&&=\r_{u_{p-i}} \cS^{-lp +p} \left( \l_{u_{(l-1)p}} \cP^{(l-1,(l+1)p-i)} \right) -\l_u
			\cS^{-lp} \left( \r_{u_{(l+1)p-i}}\cS^p \cP^{(l-1,(l+1)p-i)}\right)\\
			&&=\r_{u_{p-i}} \l_u \cS^{-lp +p} \cP^{(l-1,(l+1)p-i)} -\l_u \r_{u_{p-i}}
			\cS^{-lp+p}  \cP^{(l-1,(l+1)p-i)}=0 .
		\end{eqnarray*}
		We proved the statement.
	\end{proof}
	
	Notice that we can rewrite the recursion operator \eqref{Re} in the form
	\begin{align}
	&\Re=(\r_u \cS -\l_u \cS^{-p}) (\cS -1)^{-1}(\cS^{p} \r_u-\l_u\cS^{-1} ) \cdot \!\! \prod_{i=1}^{\rightarrow{(p-1)}} (\cS^{p-i} \r_u-\l_u\cS^{-i} )^{-1}
	(\cS^{p-i} \r_u-\l_u \cS^{-(i+1)}) \nonumber\\
	&\quad \cdot ( \r_u-\l_u \cS^{-p})^{-1} \ . \label{rere}
	\end{align}
	Using it, we are able to prove the following result:
	\begin{theorem} For Narita-Itoh-Bogoyavlensky lattice \eqref{bogadd}, starting
		from the equation itself, its symmetries $Q^l=\Re^l(u_t)$ generated
		by recursion operator \eqref{Re} are local and
		\begin{equation}\label{syms}
		Q^l=\Re^l(u_t)=(\r_u \cS-\l_u \cS^{-p}) \cS^{-lp} \cP^{(l+1,(l+1)p-1)}\quad \mbox{for
			all} \quad   0\leq l\in\bbbz .
		\end{equation}
	\end{theorem}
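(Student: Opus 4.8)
The plan is to prove \eqref{syms} by induction on $l$. The base case $l=0$ is nothing but the rewriting \eqref{bogfun} of the equation \eqref{bogadd} itself, so $Q^0=u_t$ has the asserted form and is manifestly local. For the inductive step one assumes that $Q^l=(\r_u\cS-\l_u\cS^{-p})\cS^{-lp}\cP^{(l+1,(l+1)p-1)}$ is local of the stated form, and computes $Q^{l+1}=\Re(Q^l)$ by applying the recursion operator in the factorised form \eqref{rere}, feeding $Q^l$ through its elementary factors one at a time, starting from the rightmost factor $(\r_u-\l_u\cS^{-p})^{-1}$.

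The computational engine is Proposition \ref{pro}: the identity \eqref{imp} says precisely that each building block $\cS^{p-i}\r_u-\l_u\cS^{-i}$ (and $\cS^{p-i}\r_u-\l_u\cS^{-(i+1)}$) occurring in \eqref{rere} intertwines the family of expressions $\cS^{a}\cP^{(m,b)}$ of fixed degree $m$, shifting the pair $(a,b)$ in a controlled way. Consequently each factor of the form $(\cS^{p-i}\r_u-\l_u\cS^{-i})^{-1}(\cS^{p-i}\r_u-\l_u\cS^{-(i+1)})$ sends one such polynomial to an adjacent one, and telescoping through the whole product $\prod_{i=1}^{\rightarrow{(p-1)}}$ together with the leading block $\cS^{p}\r_u-\l_u\cS^{-1}$ carries the argument along a chain of $\cP$-polynomials, with the nonlocal pieces introduced by the individual inverse factors cancelling pairwise along the way. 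The degree in $u$ is raised from $l+1$ to $l+2$ by invoking the structural relations \eqref{prop1} and \eqref{prop2}, which express how a $\cP^{(m,\cdot)}$ and a $\cP^{(m-1,\cdot)}$ are related. The case $p=1$ is included under the convention that the empty product is $1$ and just reproduces the Volterra recursion of Section \ref{ssec:Volterra}.

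The decisive, and genuinely nontrivial, step is the treatment of the nonlocal factor $(\cS-1)^{-1}$: after the preceding factors have been processed, the argument presented to it must be a total difference, i.e.\ lie in the image of $\cS-1$. This is exactly guaranteed by \eqref{prop3}, which writes $(\cS-1)\cP^{(m,k)}$ as the explicit local expression $(\cS\r_{u_k}\cS^p-\l_{u_{(m-1)p}})\cP^{(m-1,k)}$; applying $(\cS-1)^{-1}$ therefore returns the genuine difference polynomial $\cS^{-(l+1)p}\cP^{(l+2,(l+2)p-1)}$, and the final multiplication by the local operator $\r_u\cS-\l_u\cS^{-p}$ yields $Q^{l+1}$ in the claimed form, in particular establishing its locality. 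A direct comparison of \eqref{syms} with the closed formulas of \cite{svin09} then identifies the two hierarchies up to an overall sign, as announced before the statement.

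The main obstacle will be the bookkeeping: tracking the shift exponent $a$ and the two indices $(m,k)$ of $\cP^{(m,k)}$ as the argument is pushed through the $\sim p$ non-commuting factors of \eqref{rere}, and -- crucially -- checking that the various nonlocal contributions generated by the inverse factors $(\cS^{p-i}\r_u-\l_u\cS^{-i})^{-1}$ and $(\r_u-\l_u\cS^{-p})^{-1}$ conspire to leave, at the input of $(\cS-1)^{-1}$, exactly an expression of the form \eqref{prop3}, with no leftover nonlocal tail. This is where the precise coefficients in \eqref{Re}--\eqref{rere} and the exact indices in \eqref{imp} and \eqref{prop1}--\eqref{prop3} all have to match up.
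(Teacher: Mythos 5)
Your proposal follows essentially the same route as the paper's own proof: induction on $l$ with base case \eqref{bogfun}, pushing $Q^{l-1}$ through the factorised form \eqref{rere} from the rightmost factor onward, using the intertwining identity \eqref{imp} of Proposition \ref{pro} to telescope through the product, and resolving the $(\cS-1)^{-1}$ factor via \eqref{prop3} so that the output is the local polynomial $\cS^{-lp}\cP^{(l+1,(l+1)p-1)}$. All the key ingredients and their order of application match the paper; only the explicit index bookkeeping remains to be carried out.
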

	\begin{proof} First the statement is clearly true for $l=0 $ from \eqref{bogfun}.
		Assume the statement is true for $l-1\geq 0$. Let us compute the next symmetry
		$Q^{l}$.
		Taking $i=p$  in \eqref{imp} and using the induction assumption, we have
		\begin{eqnarray*}
			( r_u-\l_u \cS^{-p})\cS^{-lp+p} \cP^{(l,lp)}= (\r_u-\l_u\cS^{-(p+1)} ) \cS^{-lp+p+1}
			\cP^{(l,lp-1)}=Q^{l-1}.
		\end{eqnarray*}
		Hence
		\begin{eqnarray*}
			&&Q^l=\Re (Q^{l-1})=(\r_u \cS -\l_u \cS^{-p}) (\cS -1)^{-1}(\cS^{p} \r_u-\l_u\cS^{-1} )
			\\
			&&\quad \cdot\prod_{i=1}^{\rightarrow{p-1}} 
			(\cS^{p-i} \r_u-\l_u \cS^{-i})^{-1}(\cS^{p-i} \r_u-\l_u\cS^{-(i+1)} ) \cdot \cS^{-lp+p}
			\cP^{(l,lp)} ,
		\end{eqnarray*}
		where we used formula \eqref{rere}.
		We now recursively apply formula \eqref{imp} for $i$ from $p-1$ to $1$ and
		obtain 
		\begin{eqnarray*}
			&&\quad Q^l=\Re (Q^{l-1})=(\r_u \cS -\l_u \cS^{-p}) (\cS -1)^{-1}(\cS^{p} \r_u-\l_u\cS^{-1} )
			\cS^{-lp+1} \cP^{(l,lp+p-1)} \\
			&&=(\r_u \cS -\l_u \cS^{-p}) (\cS -1)^{-1} \left(\r_{u_p} \cS^{-lp+p+1}
			\cP^{(l,lp+p-1)} -\l_u \cS^{-lp} \cP^{(l,lp+p-1)} \right) \\
			&&= (\r_u \cS -\l_u \cS^{-p}) \cS^{-lp} (\cS-1)^{-1}\left( \r_{u_{(l+1)p}} \cS^{p+1} 
			\cP^{(l,(l+1)p-1)} -\l_{u_{lp}}  \cP^{(l,(l+1)p-1)} \right)\\
			&&= (\r_u \cS -\l_u \cS^{-p}) \cS^{-lp} \cP^{(l+1,(l+1)p-1)} .
		\end{eqnarray*}
		Here we used formula \eqref{prop3} for $l$ being $l+1$ and $k$ being
		$(l+1)p-1$, that is,
		\begin{eqnarray*}
			&&(\cS-1) \cP^{(l+1,(l+1)p-1)}=\r_{u_{(l+1)p}} \cS^{p+1} (\cP^{(l,(l+1)p-1)})
			-\l_{u_{lp}} \cP^{(l,(l+1)p-1)}.
		\end{eqnarray*}
		We completed the induction proof of the statement.
	\end{proof}
	\subsection{Hamiltonian structure}
	The Volterra chain (see Section \ref{ssec:Volterra}) is a special case of Narita-Itoh-Bogoyavlensky lattice for $p=1$. The Hamiltonian formulation of Volterra chain is given by equations \eqref{eq:Volterra_HamStr} and \eqref{eq:VoltHamF}. The natural generalisation to the generic case $p>1$ is
	\begin{equation}
	\label{eq:NIBHam}
	H=\r_u \sum_{i=1}^p \S^i \r_u-\l_u \sum_{i=1}^p \S^{-i} \l_u +\r_u\c_u +\c_u (\S-1)^{-1} \c_u.
	\end{equation}
	\begin{theorem}
		The operator $H$ defined in \eqref{eq:NIBHam} is Hamiltonian. It produces the Narita-Itoh-Bogoyavlensky lattice 
		equation for the Hamiltonian functional
		\begin{equation}
		h=\int\tr u.
		\end{equation}
	\end{theorem}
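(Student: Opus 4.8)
\emph{Strategy.} I would treat the two assertions separately. That $H$ produces \eqref{bogadd} is immediate: since $h=\int\tr u$ we have $h_*[v]=\int\tr v$, so $\frac{\delta h}{\delta u}=1$, and applying $H$ to the constant $1$ and using $\c_u 1=u-u=0$, only the first two groups of terms survive, $\r_u\cS^i\r_u(1)=\r_u(u_i)=u_i u$ and $\l_u\cS^{-i}\l_u(1)=\l_u(u_{-i})=u\,u_{-i}$, whence $H(1)=\sum_{k=1}^p(u_k u-u u_{-k})$, the right-hand side of \eqref{bogadd}.

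For the Hamiltonian property the plan is to apply the criterion \eqref{eq:HamProp} and show $\mathbf{pr}_{H\theta}P=0$ for $P=\frac12\int\tr(\theta H\theta)$, following the pattern of the proof of Proposition \ref{thm:Volterra-Ham}. The crucial structural remark is that the nonlocal part of \eqref{eq:NIBHam} is independent of $p$ and coincides with the nonlocal part of the Volterra operator \eqref{eq:Volterra_HamStr}, because $\c_u(\cS-1)^{-1}\c_u=-\c_u(1-\cS)^{-1}\c_u$. Accordingly I would split $H=H_L+H_N$ with $H_N=\c_u(\cS-1)^{-1}\c_u$ exactly as in that proof, introduce the same formal nonlocal variables $\rho=(\cS-1)^{-1}\theta u$ and $\lambda=(\cS-1)^{-1}u\theta$, and write $P=P_L+P_N$. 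Then $H_N\theta=(\rho-\lambda)u-u(\rho-\lambda)$ and $P_N=\frac12\int\tr(\rho-\lambda)(\rho_1-\lambda_1)$ are verbatim those of Proposition \ref{thm:Volterra-Ham}, while $H_L\theta=\sum_{i=1}^p\bigl(\theta_i u_i u-u u_{-i}\theta_{-i}\bigr)+\theta u^2-u\theta u$ and, using the graded commutativity of the trace, $\tr(\theta u\theta u)=0$ and $\int\tr\cS^i f=\int\tr f$, one obtains $P_L=\int\tr\bigl(\sum_{i=1}^p u_i u\,\theta\theta_i+\tfrac12 u^2\theta\theta\bigr)$. With $K_i:=\r_u\cS^i\r_u-\l_u\cS^{-i}\l_u$ and $P_{K_i}:=\int\tr u_i u\,\theta\theta_i$, the Poisson condition breaks, as in the Volterra case, into the four blocks $\mathbf{pr}_{H_L\theta}P_L+\mathbf{pr}_{H_N\theta}P_L+\mathbf{pr}_{H_L\theta}P_N+\mathbf{pr}_{H_N\theta}P_N$.

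The next step is to separate out all the contributions carrying the single index $i=1$: assembled together they are exactly $\mathbf{pr}_{H^{(1)}\theta}P^{(1)}$ for the Volterra operator $H^{(1)}=K_1+\r_u\c_u+H_N$ of \eqref{eq:Volterra_HamStr} and its bivector $P^{(1)}$, hence vanish by Proposition \ref{thm:Volterra-Ham}. It then remains to show that the sum of the \emph{new} contributions vanishes, namely $\sum_{(i,j)\ne(1,1)}\mathbf{pr}_{K_i\theta}P_{K_j}$, $\sum_{i\geq 2}\mathbf{pr}_{K_i\theta}\bigl(\tfrac12\int\tr u^2\theta\theta+P_N\bigr)$, $\sum_{j\geq 2}\mathbf{pr}_{\r_u\c_u\theta}P_{K_j}$ and $\sum_{j\geq 2}\mathbf{pr}_{H_N\theta}P_{K_j}$. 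Each piece is handled with the toolbox of Proposition \ref{thm:Volterra-Ham}: move the $u$'s to the left via the graded commutativity of the trace; normalise shifts with $\int\tr\cS^m f=\int\tr f$; in the mixed local/nonlocal terms replace $a(\rho_1-\lambda_1)b$ by $a_{-1}(\rho-\lambda)b_{-1}$ and $(\rho-\lambda)(a_{-1}-a)$ by the local expression $[(\cS-1)(\rho-\lambda)]a$, and use $\int\tr(\cS-1)(AAA)=0$; finally discard all exact terms $\int\tr(\cS^m-1)(\cdots)=0$. The ``diagonal'' blocks $i=j\geq 2$ behave exactly as in the Volterra proof, collapsing to $\int\tr(\cS^i-1)(u^2 u_{-i}\theta_{-i}\theta\theta)=0$.

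The main obstacle is the combinatorial bookkeeping of the off-diagonal brackets: there are $O(p^2)$ terms $\mathbf{pr}_{K_i\theta}P_{K_j}$ with $i\ne j$, plus $O(p)$ mixed local/nonlocal contributions, and one must check that after the systematic rewriting above they cancel in pairs --- the part of $\mathbf{pr}_{K_i\theta}P_{K_j}$ of a given shape against the corresponding part of $\mathbf{pr}_{K_j\theta}P_{K_i}$ once a global shift is applied --- leaving no $p$-dependent residue. I expect the cleanest route is: (a) verify $\mathbf{pr}_{K_i\theta}P_{K_j}+\mathbf{pr}_{K_j\theta}P_{K_i}=0$ for $i\ne j$ directly, pairing the term produced by $\cS^{j}(K_i\theta)$ acting on $P_{K_j}$ with the one produced by $\cS^{i}(K_j\theta)$ acting on $P_{K_i}$ (and similarly for the two mixed-type terms), so that the off-diagonal brackets drop out entirely; (b) dispose of the remaining local/nonlocal cross-terms by exactly the substitutions used for $\mathbf{pr}_{H_N\theta}P_L$ and $\mathbf{pr}_{H_L\theta}P_N$ in Proposition \ref{thm:Volterra-Ham}, which go through unchanged precisely because $H_N$ is $p$-independent. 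Confirming that step (a) really eliminates all off-diagonal terms --- so that nothing survives beyond the $p$ copies of the Volterra cancellation --- is where I expect the bulk of the computational work to lie.
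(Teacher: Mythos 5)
Your derivation of the equation from $h=\int\tr u$ is correct and is exactly the paper's argument. For the Hamiltonian property, however, the paper proceeds by induction on $p$: writing $H_p=\r_u\cS^p\r_u-\l_u\cS^{-p}\l_u+H_{p-1}$ and $P_p=P^p+P_{p-1}$ with $P^p=\int\tr u_pu\theta\theta_p$, only the three ``new'' blocks $\mathbf{pr}_{H^p\theta}P^p+\mathbf{pr}_{H_{p-1}\theta}P^p+\mathbf{pr}_{H^p\theta}P_{p-1}$ need to be computed at each step, the fourth vanishing by the inductive hypothesis. Your direct, non-inductive decomposition is a legitimate alternative in outline; your identification of all $i=1$ contributions with the Volterra block of Proposition \ref{thm:Volterra-Ham} is sound, as is the observation that the nonlocal part of \eqref{eq:NIBHam} is $p$-independent and coincides with that of \eqref{eq:Volterra_HamStr}.

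The genuine gap is your step (a). The off-diagonal blocks do \emph{not} cancel in transpose pairs: $\mathbf{pr}_{K_i\theta}P_{K_j}+\mathbf{pr}_{K_j\theta}P_{K_i}$ is not zero in $\hF$ for general $i\neq j$, even after the associated mixed local/nonlocal terms are added. The paper's own computation shows the true mechanism: after all the shift-based cancellations between the $(p,k)$ and $(k,p)$ interactions one is left with $\int\tr\bigl(u_pu\theta\, u_k\theta_k\,\theta_p-u_ku_{k-p}\theta_{k-p}u\theta\theta_k\bigr)$, and shifting the second monomial by $p-k$ turns this into $\int\tr u_pu\theta\,(u_k\theta_k-u_{p-k}\theta_{p-k})\,\theta_p$. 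For $k\neq p-k$ this is a nonzero element of $\hF$; it is killed only by the residue of the $(p,p-k)$ interaction, i.e.\ after summing over $k=1,\dots,p-1$ and using the antisymmetry under $k\mapsto p-k$. Already for $p=3$ the $(3,1)$/$(1,3)$ block leaves $\int\tr u_3u\theta(u_1\theta_1-u_2\theta_2)\theta_3\neq 0$, which cancels against the $(3,2)$/$(2,3)$ block, not internally. So the claim that ``nothing survives beyond the $p$ copies of the Volterra cancellation'' fails --- and note in any case that there are no $p$ independent copies of that cancellation available, since the $\r_u\c_u$ and nonlocal pieces occur once in $H$, not once per index $i$. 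The strategy can be repaired by replacing the pairing $(i,j)\leftrightarrow(j,i)$ with $(i,j)\leftrightarrow(i,i-j)$ for $j<i$, but organising that bookkeeping shell by shell in the leading index is precisely the paper's induction.
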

	\begin{proof}
		The theorem is made of two statements. The second one is a straightforward computation. We have
		\begin{align}
		u_t&=H \frac{\delta h}{\delta u}=H(1)=\sum_{i=1}^p\left(u_iu-uu_{-i}\right)
		\end{align}
		
		The first statement is proved by induction. Let us denote $H_p$ the operator \eqref{eq:NIBHam} when the upper bound of the sums is $p$. We have proved in Proposition \ref{thm:Volterra-Ham} that $H_1$ is Hamiltonian. We show here that, if $H_{p-1}$ is Hamiltonian, $H_p=\r_u\cS^p\r_u-\l_u\cS^{-p}\l_u+H_{p-1}$ is Hamiltonian too. For simplicity we denote $H^p=H_p-H_{p-1}$ the added terms, and use 
		\begin{align}
		P^p&=\int\tr u_pu\theta\theta_p,\\
		P_{p-1}&=\frac12\int\tr\left(2\sum_{k=1}^{p-1}u_ku\theta\theta_k+u^2\theta\theta+(\rho-\lambda)(\rho_1-\lambda_1)\right),\\
		P_p&=P^p+P_{p-1}
		\end{align}
		for the corresponding bivectors. In the remaining part of the proof, we will omit the summation symbol and adopt the convention that any expression including $k$ is summed for $k=1,\ldots ,p-1$.
		
		The Hamiltonian condition reads
		\begin{align}
		\mathbf{pr}_{H_p\theta}(P_p)&=\mathbf{pr}_{H^p\theta}P^p+\mathbf{pr}_{H_{p-1}\theta}P^p+\mathbf{pr}_{H^p\theta}P_{p-1}+\mathbf{pr}_{H_{p-1}\theta}P_{p-1}\\
		&=\mathbf{pr}_{H^p\theta}P^p+\mathbf{pr}_{H_{p-1}\theta}P^p+\mathbf{pr}_{H^p\theta}P_{p-1}=0,
		\end{align}
		where we have dropped the last term because of the inductive hypothesis. The computation of each term follows the lines of the aforementioned proof of Proposition \ref{thm:Volterra-Ham}. Observe that the nonlocal terms are present only in $H_0$ and are the same as in \eqref{eq:Volterra_HamStr}.
		For each of the summands we obtain
		\begin{align}\label{eq:HamNIB1}
		\mathbf{pr}_{H^p\theta}(P^p)&=\int\tr\left(u_p u\theta\theta_p u_p \theta_p-u_pu\theta u\theta\theta_p\right),\\
		\mathbf{pr}_{H_{p-1}\theta}(P^p)&=\int\tr\left(u_{k+p}u_pu\theta\theta_p\theta_{p+k}+u_p^2u\theta\theta_p\theta_p-u_puu_{-k}\theta_{-k}\theta\theta_p\right.\\
		&\qquad\qquad+u_ku\theta\theta_pu_p\theta_k-u_pu\theta\theta_pu_p\theta_p-u_pu_{p-k}\theta_{p-k}u\theta\theta_{p}\\
		&\qquad\qquad-u_pu\theta u\theta\theta_p+u^2\theta\theta_pu_p\theta\\
		&\qquad\qquad\left.+(\rho-\lambda)(uu_{-p}\theta_{-p}\theta-\theta\theta_pu_pu)-[(\cS^p-1)(\rho-\lambda)]u\theta\theta_pu_p\right)\\\intertext{that, using $(\cS^p-1)(\rho-\lambda)=\sum_{k=0}^{p-1}(\theta_ku_k-u_k\theta_k)$ (note the different summation boundary), gives}
		&=\int\tr\left(u_{k+p}u_pu\theta\theta_p\theta_{p+n}+u_p^2u\theta\theta_p\theta_p-u_puu_{-k}\theta_{-k}\theta\theta_p\right.\\\label{eq:HamNIB2}
		&\qquad\qquad\left.-u_pu\theta\theta_pu_p\theta_p+(\rho-\lambda)(uu_{-p}\theta_{-p}\theta-\theta\theta_pu_pu)\right),\\
		\mathbf{pr}_{H^p\theta}(P_{p-1})&=\int\tr\left(u_{p+k}u_ku\theta\theta_k\theta_{p+k}-u_kuu_{-p}\theta_{-p}\theta\theta_k-u^2u_{-p}\theta_{-p}\theta\theta\right.\\
		&\qquad\qquad-u_ku_{k-p}\theta_{k-p}u\theta\theta_k+u_pu\theta u\theta\theta_p+u_pu\theta u_k\theta_k\theta_p\\\label{eq:HamNIB3}
		&\qquad\qquad\left.+(\rho-\lambda)(\theta\theta_pu_pu-uu_{-p}\theta_{-p}\theta)\right).
		\end{align}
		The sum of the three terms, using the properties of the integral, yields
		\begin{equation}
		\mathbf{pr}_{H_p\theta}(P_p)=\int\tr\left(u_pu\theta u_k\theta_k\theta_{p}-u_ku_{k-p}\theta_{k-p}u\theta\theta_k\right).
		\end{equation}
		A similar expression appeared in the computations of \eqref{eq:HamNIB2}. We first shift the second summands by $p-k$, obtaining
		\begin{equation}
		\mathbf{pr}_{H_p\theta}(P_p)=\int\tr u_pu\theta\left( u_k\theta_k-u_{p-k}\theta_{p-k}\right)\theta_{p},
		\end{equation}
		and then observe that the (implicit) summation runs from $1$ to $p-1$, so that the terms in the bracket cancel out when summed over $k$. We can then conclude that $\mathbf{pr}_{H_p\theta}(P_p)=0$, proving our first claim.
	\end{proof}	
	\section{More nonabelian integrable equations}\label{sec:List}
	In this section we present some examples of integrable nonabelian integrable systems.
	They are the nonabelian counterparts of integrable systems in the list in \cite{kmw13}.
	We provide their Lax representations and recursion operators without giving the computational details. Moreover, we are able to obtain the Hamiltonian structure and functional for some of them. 
	To the best of our knowledge, this is the first collection of such systems and their recursion operators are new.

	\subsection{Nonabelian modified Volterra}
	The nonabelian modified Volterra equation 
	\begin{equation}\label{eq:mVol}
	u_t=u_1u^2-u^2u_{-1}
	\end{equation}
	is obtained from the Volterra equation \eqref{eq:Volterra} with the Miura transformation $w=u_1 u$. More precisely, if $u$ is a solution of modified Volterra then $w$ is a solution of Volterra equation. Indeed,
	\begin{equation}
	w_t=w_1w-ww_{-1}=u_2u_1^2u-u_1u^2u_{-1}=u_{1t}u+u_1u_t.
	\end{equation}
	
	Applying the Miura transformation, from the Volterra chain \eqref{eq:Volterra} we obtain the  Hamiltonian structure for \eqref{eq:mVol}
	\begin{eqnarray*}
		&&H=w_*^{-1}
		\left(\r_{u_1u}\cS\r_{u_1u}-\l_{u_1u}\cS^{-1}\l_{u_1u}+\c_{u_1u}\r_{u_1u}+\c_{u_1u}(\cS-1)^{-1}\c_{u_1u}\right)
		{w_*^{\dagger}}^{-1}\\
		&&\quad=\left(\r_u-\l_u\cS^{-1}\right)\left(\cS-1\right)^{-1}\left(\cS\r_u-\l_u\right)+\left(\r_u\cS+\l_{u_1}\right)^{-1}\l_{u_1u}\left(\cS^{-1}\l_u-\r_u\right) 
	\end{eqnarray*}
	with the Hamiltonian functional
	\begin{equation}
	h=\int\tr u_1u.
	\end{equation}
	and the recursion operator $\Re$ 
	\begin{equation}
	\Re=\left(\r_u-\l_u \cS^{-1}\right)(\cS-1)^{-1}\left(\l_{u_1u}-\cS \r_{u_1u}\cS\right)\left(\l_{u_1u}-\r_{u_1u}\cS\right)^{-1}\left(\r_u\cS+\l_{u_1}\right).
	\end{equation}
	Here we used the identity
	\begin{eqnarray*}
		&&\left(\r_u\cS+\l_{u_1}\right)\left(\r_u-\l_u\cS^{-1}\right)=\r_{u_1 u} \cS-\l_{u_1 u}\cS^{-1}
	\end{eqnarray*}
	to simplify the expressions of $H$ and $\Re$.
	\begin{note}
		We can consider the nonabelian version of a more general Volterra-like equation (see \cite{ay94} for the commutative case)
		\begin{equation}
		u_t=u_1 P(u)-P(u)u_{-1},
		\end{equation}
		for $P(u)=\alpha u^2 + \beta u +\gamma$, with $\alpha,\beta,\gamma\in\C$. Similarly to the commutative case, such a system can be reduced to the Volterra one if $\alpha=0$, $\beta\neq 0$ with the change of coordinates $u\mapsto \beta^{-1}(u-\gamma)$; when $\alpha\neq 0$, a change of coordinates $t\mapsto \alpha^{-1} t$, $u\mapsto u-\beta/2\alpha$ produces the same equation with $P(u)=u^2+c$, $c=\gamma/\alpha-\beta^2/(4\alpha^2)$. The Lax pair representation of this system is the same as for the commutative case \cite{kmw13}
		\begin{align}
		U&=\begin{pmatrix}
		c \lambda^{-1} & u \\ -u & \lambda
		\end{pmatrix}& B&=\begin{pmatrix} c^2\lambda^{-2}+uu_{-1} & c\lambda^{-1}u+\lambda u_{-1}\\ -c\lambda^{-1}u_{-1}-\lambda u & \lambda^2+uu_{-1}\end{pmatrix},
		\end{align}
	\end{note}
	\subsection{Nonabelian relativistic Toda}\label{RT}
	The nonabelian version of the relativistic Toda system \cite{Kup00,G11}
	\begin{equation}
	\begin{cases}
	u_t=&u\left(u_{-1}+v\right)-\left(u_1+v_1\right)u\\
	v_t=&vu_{-1}-uv
	\end{cases}
	\end{equation}
	is Hamiltonian with respect to the Hamiltonian operator
	\begin{equation}
	H_1=\begin{pmatrix}
	0 & \l_u-\r_u\cS\\ \cS^{-1}\l_u-\r_u & \r_u\cS-\cS^{-1}\l_u-\c_v
	\end{pmatrix}
	\end{equation}
	and the Hamiltonian functional
	\begin{equation}
	h_0=\int\tr\left(\frac12\left(u^2+v^2\right)+uv+u_1u+uv_1\right).
	\end{equation}
	As for the non-relativistic Toda lattice, the proof that $H_1$ is a Hamiltonian difference operator is obtained by a direct computation, after deriving
	\begin{gather}
	H_1\Theta=\begin{pmatrix}
	u\zeta-\zeta_1u\\u_{-1}\theta_{-1}-\theta u+\zeta_1 u-u_{-1}\zeta_{-1}-\zeta v+v\zeta
	\end{pmatrix},\\
	P=-\int\tr\left(u\zeta\theta+u\theta\zeta_1-u\zeta\zeta_1+v\zeta\zeta\right).
	\end{gather}
	The hierarchy shares the same Lax representation of the commutative case
	\begin{align}
	U&=\begin{pmatrix}
	\lambda v-\lambda^{-1} & u_{-1}\\-1&0
	\end{pmatrix}& B&=\begin{pmatrix}
	-\lambda^{-2}-u_{-1} & \lambda^{-1}u_{-1}\\ -\lambda^{-1} & -u_{-2}-v_{-1}
	\end{pmatrix}
	\end{align}
	from which we can compute the recursion operator
	\begin{equation}
	\begin{split}
	\Re&=\begin{pmatrix}\l_u\r_v-\r_u\cS \l_v & \l_u-\r_u\cS^2\\
	0 & \l_v-\r_v\cS
	\end{pmatrix}\cdot\begin{pmatrix}1 & 0\\
	\left(1-\cS\right)^{-1}\left(\cS^{-1}\l_u-\r_u\cS\right) & 1
	\end{pmatrix}\\
	&\qquad\qquad
	\cdot\begin{pmatrix}
	\left(\l_u-\r_u\cS\right)^{-1} & 0\\0 & \left(1-\cS\right)^{-1}
	\end{pmatrix}
	\end{split}
	\end{equation}
	through the ansatz
	\begin{equation}\label{eq:rTodaAnsatz}
	B^{t}=\lambda^{-2}B^{\tau}+\lambda^{-2}C^{(-2)}+\lambda^{-1}C^{(-1)}+C^{(0)},
	\end{equation}
	where $C^{(-2)}$ and $C^{(0)}$ are diagonal matrices and $C^{(-1)}$ is off-diagonal.
	
	We are able to write down the inverse recursion operator $\Re^{-1}$ in the same way as in the commutative case 
	\cite{kmw13}; it can be obtained in a similar way, replacing the role of $t$ and $\tau$ in the ansatz 
	\eqref{eq:rTodaAnsatz}. We get
	\begin{equation}
	\begin{split}
	\Re^{-1}&=\begin{pmatrix}
	\r_u\cS-\l_u & 0\\
	\left(\cS^{-1}\l_u-\r_u\cS\right) & 1-\cS
	\end{pmatrix}\cdot\begin{pmatrix}
	1 & \left(\r_u\cS \l_v-\l_u\r_v\right)^{-1}\left(\r_u\cS^{2}-\l_u\right)\\
	0 & 1
	\end{pmatrix}\cdot\\
	&\qquad\cdot\begin{pmatrix}
	\left(\r_u\cS \l_v-\l_u\r_v\right)^{-1}& 0 \\0 & \left(\l_v-\r_v\cS\right)^{-1}
	\end{pmatrix}
	\end{split}
	\end{equation}
	The recursion operator $\Re^{-1}$ has a different seed than $\Re$, namely
	\begin{equation}
	\begin{pmatrix}
	u v^{-1}-{v_1}^{-1}u\\
	v_1^{-1}u-u_{-1}v_{-1}^{-1}
	\end{pmatrix}.
	\end{equation}
	Using the recursion operator $\Re$, we obtain the second Hamiltonian structure for the relativistic Toda system, namely we have $H_2=\Re H_1$ and
	\begin{align}
	\left(H_2\right)_{11}&=\l_u\cS^{-1}\l_u-\r_u\cS \r_u-\r_u\c_u+\c_u(1-\cS)^{-1}\c_u\\
	\left(H_2\right)_{12}&=\l_u\r_v-\r_u\c_v-\r_u\cS \r_v + \c_u(1-\cS)^{-1}\c_v\\	
	\left(H_2\right)_{21}&=\l_v\cS^{-1}\l_u-\r_{uv} + \c_v(1-\cS)^{-1}\c_u\\	
	\left(H_2\right)_{22}&= -\c_v\r_v+\c_v(1-\cS)^{-1}\c_v
	\end{align}
	\begin{note}
		In the commutative case, the Relativistic Toda equation \cite{r90}, written in variables
		$\bar{u}$ and $\bar{v}$ is related to the Relativistic Volterra Lattice by the Miura
		transformation $\bar{u}=-vu$ and $\bar{v}=- (u+v_{-1}+1)$ \cite{kmw13}. For the nonabelian case, this transformation produces the \emph{nonabelian Relativistic Volterra lattice}
		\begin{eqnarray*}
			\left\{ {\begin{array}{l} u_t= vu(1+u)-uv_{-1}(1+u_{-1})\\ v_t=  (1+v_1)u_1v-(1+v)vu. \end{array} } \right.
		\end{eqnarray*}
		It possesses a Lax representation
		\begin{eqnarray*}
			U= \left( \begin{array}{cc} \lambda^2+2u_1+2v+1 & -\lambda(2v+1)-\lambda^{-1}(2u_1+1) \\ -\lambda(2u_1+1)-\lambda^{-1}(2v+1) & \lambda^{-2}+2u_1+2v+1 \end{array}\right)\\
			B= \left( \begin{array}{cc}-\frac{\lambda^2-\lambda^{-2}}{8}+vu+\frac u2+\frac v2 & \frac\lambda4(2v+1)+\frac{\lambda^{-1}}{4}(2u+1) \\ \frac\lambda4(2u+1)+\frac{\lambda^{-1}}{4}(2v+1) & \frac{\lambda^2-\lambda^{-2}}{8}+vu+\frac u2+\frac v2 \end{array}\right) 
		\end{eqnarray*}
	\end{note}
	\subsection{Nonabelian Merola-Ragnisco-Tu Lattice}
	\begin{eqnarray*}
		\left\{ {\begin{array}{l} u_t= u_{1}-u v u\\ v_t=  -v_{-1}+v uv \end{array} } \right.
	\end{eqnarray*}
	Lax representation:
	\begin{eqnarray*}
		U= \left( \begin{array}{cc} -1 & v \\ u & -2\lambda-u v \end{array}\right); \qquad
		B= \left( \begin{array}{cc}-\lambda & -v_{-1} \\ -u & \lambda \end{array}\right) 
	\end{eqnarray*}
	Hamiltonian structure:
	\begin{eqnarray*}
		H= \left( \begin{array}{cc} 0 & 1 \\ -1 & 0 \end{array}\right); \qquad
		h= u_1v-\frac{uvuv}{2}
	\end{eqnarray*}
	Recursion operator:
	\begin{eqnarray*}
		\Re= \left( \begin{array}{cc} \cS-\l_{uv} & 0 \\ 0 & \cS^{-1}-\r_{uv} \end{array}\right)+\left( \begin{array}{cc} -\r_u\cS & -\l_u \\ l_v & \r_v\cS \end{array}\right)(\cS-1)^{-1}\left( \begin{array}{cc} \r_v & \l_u \\ \l_v & \r_u \end{array}\right)
	\end{eqnarray*}
	\subsection{Nonabelian Kaup Lattice} \label{ssec:Kaup}
	\begin{eqnarray*}
		\left\{ {\begin{array}{l} u_t=(u_1-u) (u+v) \\ v_t= (u+v) (v-v_{-1}) \end{array} } \right.
	\end{eqnarray*}
	Lax representation 
	\begin{eqnarray*}
		U= \left( \begin{array}{cc} u-\lambda & uv+\lambda (u+v)+\lambda^2 \\ 1 & v-\lambda \end{array}\right); \qquad 
		B= \left( \begin{array}{cc}u & (u+\lambda)(v_{-1}+\lambda) \\ 1 & v_{-1} \end{array}\right)
	\end{eqnarray*}
	We compute its recursion operator with the ansatz
	\begin{equation*}
	B^{(t)}= \lambda B^{(\tau)}+\left( \begin{array}{cc}\lambda a+e & \lambda^2a+\lambda b+c  \\a& \lambda a+d \end{array}\right)
	\end{equation*}
	and obtain 
	\begin{eqnarray*}
		&\Re=\left( \begin{array}{cc}  -\r_u\cS & -\l_u\\ -\r_v&-\l_v\cS^{-1}\end{array}\right)+\left( \begin{array}{c}  \l_{u_1-u}\r_{u+v}\cS\\ \l_{u+v}\r_{v-v_{-1}}\end{array}\right)\left(\l_{u+v}-\r_{u+v}\cS\right)^{-1}\left( \begin{array}{cc} 1& 1\end{array}\right)\\
		&+\left(
		\begin{array}{c}  \cS\\ -1\end{array}\right)\left(\cS-1\right)^{-1}
		\left( \begin{array}{cc} \r_v\left(1-\cS\right)& \l_u\left(1-\cS^{-1}\right)\end{array}\right)
	\end{eqnarray*}
	\subsection{Nonabelian Ablowitz-Ladik Lattice} \label{ssec:AblLad}
	\begin{equation}
		\left\{ {\begin{array}{l} u_t=\alpha(u_1-u_1vu)+\beta(uvu_{-1}-u_{-1}) \\ v_t= \alpha(v u v_{-1}-v_{-1})+\beta(v_1-v_1uv) \end{array} } \right.\qquad\alpha,\beta\in\C
	\end{equation}
	Its Lax representation is
	\begin{eqnarray*}
		U= \left( \begin{array}{cc} \lambda & u \\ v & \lambda^{-1} \end{array}\right); \qquad 
		B=\alpha \left( \begin{array}{cc}\lambda^2-u v_{-1} &  \lambda u  \\ \lambda v_{-1} & 0 \end{array}\right)+\beta \left( \begin{array}{cc}0 &  \lambda^{-1} u_{-1}  \\ \lambda^{-1} v & \lambda^{-2}-vu_{-1} \end{array}\right)
	\end{eqnarray*}
	We compute its recursion operator
	with the ansatz
	\begin{equation*}
	B^{t}= \lambda^2 B^{\tau}+\left( \begin{array}{cc}a & \lambda b \\\lambda c& \lambda^2 d \end{array}\right)
	\end{equation*}
	and obtain 
	\begin{eqnarray*}
		&&\Re=\left( \begin{array}{cc}  \cS & 0\\ 0&\cS^{-1} \end{array}\right) +\left( \begin{array}{c} \r_u \cS \\ -\l_v 
		\end{array}\right)(1- \cS)^{-1} 
		\left( \begin{array}{cc} \r_{v}\cS &\l_u \cS^{-1} \end{array}\right)\\ 
		&&\qquad+\left( \begin{array}{c} \r_{vu-1} \cS \l_u\\ -\l_{vu-1} \r_{v_{-1}} \end{array}\right)(\l_{vu-1}-\r_{vu-1} 
		\cS)^{-1}
		\left( \begin{array}{cc} \l_v &\r_u  \end{array}\right) 
	\end{eqnarray*}
	
	We are able to write down the inverse recursion operator $\Re^{-1}$ in the same way as in the commutative case 
	\cite{kmw13}; it can be obtained in a similar way, replacing the role of $t$ and $\tau$ in the ansatz. We get
	\begin{eqnarray*}
		&&\Re^{-1}=\left( \begin{array}{cc}  \cS^{-1} & 0\\ 0&\cS \end{array}\right) +\left( \begin{array}{c} \l_u  \\ -\r_v \cS
		\end{array}\right)(\cS-1)^{-1} 
		\left( \begin{array}{cc} \l_{v}\cS^{-1} &\r_u \cS \end{array}\right)\\ 
		&&\qquad+\left( \begin{array}{c} -\l_{uv-1} \r_{u_{-1}} \\ \r_{uv-1} \cS \l_v \end{array}\right)(\l_{uv-1}-\r_{uv-1} 
		\cS)^{-1}
		\left( \begin{array}{cc} \r_v &\l_u  \end{array}\right) 
	\end{eqnarray*}
	Note that the seed for $\Re$ is the coefficient of $\alpha$ in the equation while the seed for $\Re^{-1}$ is the coefficient
	of $\beta$.
	\subsection{Nonabelian Chen-Lee-Liu lattice}\label{ssec:cll}
	\begin{eqnarray*}
		\left\{ {\begin{array}{l} u_t=(u_1-u)(1+vu) \\ v_t= (1+vu)(v-v_{-1}) \end{array} } \right.
	\end{eqnarray*}
	Its Lax representation is
	\begin{eqnarray*}
		U= \left( \begin{array}{cc}\lambda+ uv & u\\(1-\lambda)v& 1 \end{array}\right); \qquad 
		B= \left( \begin{array}{cc}\lambda-1+uv_{-1} & u  \\ (1-\lambda) v_{-1} & 0 \end{array}\right)
	\end{eqnarray*}
	We compute its recursion operator
	with the ansatz
	\begin{equation*}
	B^{(t)}= \lambda B^{(\tau)}+\left( \begin{array}{cc}a & b \\(1-\lambda)c& (1-\lambda) d \end{array}\right)
	\end{equation*}
	and obtain 
	\begin{eqnarray*}
		&&\Re=\left( \begin{array}{cc} \r_{1+vu} \cS-\r_{vu}+\l_{(u_1-u)v} & \l_{u_1-u} \r_u +\l_u \r_u \cS^{-1}\\ \l_v \r_v& 
			\cS^{-1} \end{array}\right)\\&&\qquad +\left( \begin{array}{c} \r_u\\ -\l_v \end{array}\right)(1- \cS)^{-1} 
		\left( \begin{array}{cc} \r_{v}(1-\cS) &\l_u (1-\cS^{-1}) \end{array}\right)\\ 
		&&\qquad+\left( \begin{array}{c} \l_{(u-u_1)(1+vu)}\\ \l_{1+vu} \r_{v_{-1}-v} \end{array}\right)(\l_{1+vu}-\r_{1+vu} 
		\cS)^{-1}
		\left( \begin{array}{cc} \l_v &\r_u  \end{array}\right) 
	\end{eqnarray*}
	
	\subsection{Nonabelian Blaszak-Marciniak Lattice}\label{BM}
	\begin{eqnarray*}
		\left\{ {\begin{array}{l} u_t=w_1 -w_{-1} \\ v_t= w_{-1}u_{-1}-u w \\ w_t=wv-v_1w \end{array} } \right.
	\end{eqnarray*}
	It is a Hamiltonian system with a Hamiltonian structure
	\begin{equation}
	H_1=\begin{pmatrix}
	\cS-\cS^{-1} & 0 & 0 \\ 0 & \l_v-\r_v & \cS^{-1}\l_w-\r_w \\ 0 & \l_w-\r_w\cS & 0
	\end{pmatrix},\qquad h=uw+\frac{1}{2}v^2
	\end{equation}
	Its Lax representation is
	\begin{eqnarray*}
		L = \cS^2 + u_1 \cS - v_1 + w\cS^{-1},
		\qquad A = \cS^2 + u_1 \cS - v_1
	\end{eqnarray*}
	from which we can compute the recursion operator $\Re$ by 
	$$L_t=[L, \ A^{t}]$$
	with the ansatz
	\begin{equation*}
	A^{t}=LA^{\tau}+a \cS+b+c \cS^{-1}.
	\end{equation*}
	The explicit form of the operator is rather big. In factorised form we have $\Re=KH_1^{-1}$ with
	\begin{align}
	\left(K\right)_{11}&=\l_v\cS^{-1}-\cS \r_v+ (\l_u \cS-\r_{u}) (1-\cS^2)^{-1} (\cS \r_u-\l_u)\\
	\left(K\right)_{21}&=\cS^{-1} \l_w\cS^{-1}-\r_w \cS - \c_{v} (1-\cS^2)^{-1} (\cS \r_u-\l_u)\\
	\left(K\right)_{22}&=\l_u \r_w\cS-\cS^{-1} \l_w \r_u+\r_v\c_v - \c_v(1-\cS^2)^{-1}\c_v\\
	\left(K\right)_{31}&= (\l_w-\r_w \cS) (1-\cS^2)^{-1} (\cS \r_u-\l_u)\\
	\left(K\right)_{32}&=\r_w\cS \r_v-\l_w \r_v + (\l_w-\r_w \cS) (1-\cS^2)^{-1} \c_v\\
	\left(K\right)_{33}&=\r_w\cS\r_u-\l_w \r_w + (\l_w-\r_w \cS) (1-\cS^2)^{-1} (\r_w-\cS^{-1} \l_w) .
	\end{align}

	\section{Discussion and further work}

	In this paper we have computed nonabelian Hamiltonian structures for some difference system. It should be noted that in the scalar cases (Volterra and modified Volterra, as well as Narita-Itoh-Bogoyavlensky) all the Hamiltonian operators we have identified are nonlocal, even when the nonlocal terms do not contribute to the equation. On the other hand, the first Hamiltonian structures of the two Toda systems we have investigated are local.  We could not find local nonabelian Hamiltonian structures for the systems \ref{ssec:Kaup}, \ref{ssec:AblLad}, \ref{ssec:cll}, despite their commutative counterparts are extremely simple. For instance, the Hamiltonian structure for the Abelian Ablowitz-Ladik lattice is \cite{kmw13}
	\begin{equation}
	H=\begin{pmatrix}0 & 1-uv \\-(1-uv) & 0
	\end{pmatrix}
	\end{equation}
	A recent development in the study of nonabelian ODEs \cite{ar15} suggests that brackets of Loday type can play the same role as Hamiltonian structures, satisfying weaker conditions.

	A wider investigation (and, possibly, classification) of the nonabelian difference Hamiltonian structures (on the lines of what has been done for the commutative case in \cite{dskvw19}) will be carried out in a forthcoming work, where we will also present the nonlocal structures for the aforementioned systems \ref{ssec:Kaup}, \ref{ssec:AblLad}, and \ref{ssec:cll}.
	
	The generalisation from the Abelian case to the nonabelian case is not straightforward. 
	In Abelian case, the Narita-Itoh-Bogoyavlensky lattice \eqref{bogcom} has a product form
	\begin{equation}
	v_t=v (\prod _{k=1}^p v_{k}-\prod_{k=1}^p v_{-k}), \label{bogpro}
	\end{equation}
	For fixed~$p$, it transforms into \eqref{bogcom} under the
	transformation $u= \prod_{k=0}^{p-1} v_{k}$. 
	There is also the modified Bogoyavlensky chain given by
	\begin{equation}\label{bogmod}
	w_t=w^2 (\prod_{k=1}^p w_{k}-\prod_{k=1}^p w_{-k}),
	\end{equation}
	which is related to \eqref{bogcom} by the Miura transformation $u=\prod_{k=0}^p
	w_{k}$. There Miura transformations are not valid for the nonabelian case except when $p=1$.
	
	In 2011, Adler and Postnikov introduced a family of integrable lattice hierarchies associated with fractional
	Lax operators in \cite{adler1, adler2}. One simple
	example is
	\begin{equation}\label{adler}
	u_t=u^2 (u_p \cdots u_1-u_{-1}\cdots u_{-p})-u (u_{p-1}\cdots u_1-u_{-1}\cdots
	u_{1-p}), \quad 2\leq p\in\bbbn ,
	\end{equation}
	which is an integrable discretisation for the Sawada-Kotera
	equation.  Notice that equation \eqref{adler} is a combination of equations
	\eqref{bogmod} and \eqref{bogpro} with different $p$.  For them, there are no direct generalisation to nonabelian 
	case.
	In differential case, the Sawada-Kotera
	equation possesses no  nonabelian version \cite{OS98, ow00}.
	It would be interesting to see whether there exists nonabelian discretisation for the Sawada-Kotera
	equation. On the other hand, there are two nonabelian discretizations of Burgers' equation:
	\begin{eqnarray*}
		u_t=(u_1-u)u\quad \mbox{and} \quad u_t=u(u_1-u),
	\end{eqnarray*}
	which can be transformed, by the Cole-Hopf transformation 
	\begin{eqnarray*}
		u=v_1 v^{-1}\quad \mbox{and} \quad u=v^{-1} v_1
	\end{eqnarray*}
	respectively, into the linear equation $v_t=v_1$.
	
	In this paper, we present some new examples of nonabelian integrable differential-difference systems with their 
	Lax representations, Hamiltonian and recursion operators. The list is by no mean to be complete.  
	For example, we didn't include the nonabelian Belov-Chaltikian 
	Lattice
	\begin{eqnarray*}
		\left\{ {\begin{array}{l} u_t=v_2 u-u v_{-1} \\ v_t= v_1  v-vv_{-1}+u_{-1} -u \end{array} } \right.
	\end{eqnarray*}
	whose Lax representation is
	\begin{eqnarray*}
		U= \left( \begin{array}{ccc} \lambda & \lambda v & \lambda u_{-1} \\ 1 & 0 & 0 \\ 0 & 1 & 0 \end{array}\right); \qquad 
		B= \left( \begin{array}{ccc}v-\lambda & - \lambda v & -\lambda u_{-1}  \\ -1 & v_{-1} & 0 \\ 0 & -1 & v_{-2} \end{array}\right).
	\end{eqnarray*}
	In the Abelian case, it is the Boussinesq lattice related to the lattice $W_3$-algebra \cite{hiin97}, and can be generalised to an $m$-component  Boussinesq lattice related to the lattice
	$W_{m+1}$-algebra \cite{beffawang12}. It deserves a further study on its own right. The classification problem even for scalar case remains open.
	
	The recursion operators for nonabelian integrable differential-difference systems are highly nonlocal. For Abelian case, there are some general results to prove that they generate local hierarchies
	under checkable conditions, see for example \cite{wang09}. In section \ref{sec4}, we proved that this is indeed true for 
	the nonabelian Narita-Itoh-Bogoyavlensky lattice. 
	However, we haven't proved any general results for other integrable equations.
	\section*{Acknowledgements}
	The paper is supported by the EPSRC grant
	EP/P012698/1. Both authors gratefully acknowledge the financial support. JPW would like to thank V.~V.~Sokolov for useful discussions.

\end{document}